\pdfoutput=1
\documentclass[11pt]{article}

\PassOptionsToPackage{svgnames}{xcolor}
\usepackage[preprint]{acl}
\usepackage{microtype}
\usepackage{amsmath,amssymb,mathtools,amsthm}
\usepackage{float}
\floatstyle{ruled}
\newfloat{algorithm}{tbp}{loa}
\floatname{algorithm}{Algorithm}
\floatstyle{plain}

\usepackage{booktabs}
\usepackage{colortbl}
\definecolor{TableHead}{HTML}{EAF0F5}
\definecolor{TableAccent}{HTML}{F2F6FA}
\newcommand{\tablestandard}{\setlength{\tabcolsep}{4pt}\renewcommand{\arraystretch}{1.16}}
\newcommand{\tablecompact}{\setlength{\tabcolsep}{3pt}\renewcommand{\arraystretch}{1.16}}

\usepackage{needspace,placeins}
\usepackage{enumitem}
\usepackage{graphicx}
\usepackage{listings}
\lstdefinestyle{paperalgorithm}{
  basicstyle=\normalsize\ttfamily,
  mathescape=true, breaklines=true, columns=fullflexible,
  keepspaces=true, showstringspaces=false,
  aboveskip=0.3\baselineskip, belowskip=0pt
}
\hypersetup{
  colorlinks=true,
  linkcolor=MidnightBlue,
  urlcolor=MidnightBlue,
  citecolor=MidnightBlue
}

\title{Reflective Compositional Grounding for Natural-Language-to-SPARQL}
\author{Isabelle Mohr \\
  Idiap Research Institute \\
  EPFL \\
  \texttt{isabelle.mohr@idiap.ch} \\\And
  John Dujany \\
  Merck KGaA \\
  \texttt{john.dujany@merck.com} \\\And
  Andre Freitas \\
  Idiap Research Institute \\
  \texttt{andre.freitas@idiap.ch} \\}
\date{}

\newcommand{\KG}{\mathcal{K}}
\newcommand{\Schema}{\Sigma}

\newcommand{\Enc}{\operatorname{Enc}}

\newtheorem{definition}{Definition}
\newtheorem{proposition}{Proposition}
\newtheorem{theorem}{Theorem}

\begin{document}
\maketitle
\begin{abstract}
Natural-language query generation over large knowledge graphs embodies the classical vocabulary problem in human-systems communication. Questions express concepts through wording and compositions that can differ substantially from the target schema, while limited model context requires selective access to schema information. We present an NL2SPARQL framework that places conceptual compositionality at the center of query interpretation and connects it to the relational composition required for binding and query construction. A fixed concept encoder represents compound expressions jointly, and schema candidates are ranked using semantic similarity, type compatibility, and hierarchy proximity. Reflective refinement critically examines how the grounded meanings are realized through argument bindings, joins, and operator scope. Diagnosed semantic discrepancies guide revisions to the responsible stage. We formalise a monotone refinement discipline that accumulates scoped interpretation requirements and establish conditions under which changes to the composed process preserve previously validated interpretations. Across five benchmark/knowledge-graph targets, mean schema recall@10 increases from 0.712 to 0.837. With retrieval held fixed, reflective prompt rewriting adds 2.27-3.33 answer-F1 points. These experiments evaluate the grounding and refinement mechanisms, while the formal analysis specifies the stronger requirements for conservative adaptation across queries.
\end{abstract}

\section{Introduction}

\begin{figure*}[t]
\centering
\resizebox{\textwidth}{!}{%
\input{figures/generated/first_fig.include.tex}\unskip%
}
\caption{\textbf{Reflective refinement and restricted symbolic construction extend the staged query process.} Ground ranks schema candidates by $c(s,e)$ using the fixed encoder $E_\theta$; Bind and Construct carry them into an executable query, with an optional restricted symbolic mode checked against the graph before the default LLM mode. On a diagnosed violation, Reflect attributes it to one stage $a^\star$ and extends only that stage's requirement set $\Gamma_{a^\star}$ (the figure's targeted-update guarantee); specification monotonicity is proven in Appendix~\ref{app:formalisms-refinement}.}
\label{fig:hero}
\end{figure*}

Natural-language access to databases faces a vocabulary problem. Users express information needs through concepts and relations that can differ in wording and conceptual structure from the target schema \citep{freitas2015thesis,freitas2012terminological}, a gap that widens in large, heterogeneous knowledge graphs with overlapping concepts and contextual ambiguities \citep{freitas2015complexity}. Large schemas can also exceed a language model's available context, making selective, structurally-aware retrieval necessary, which is itself a form of \textit{conceptual interpretation at scale}, since relevant schema elements may share little lexical overlap with the question. Query generation must therefore construct a relevant schema context, ground the question's concepts within it, and compose their interpretations into an executable query; \textit{conceptual interpretation} and \textit{compositionality} are thus central to the end-to-end mapping.

Consider the question ``\textit{Which organisations own stakes in unlisted companies headquartered in the same country as themselves?}'' In an illustrative investment schema, \texttt{PrivateEquity} denotes equity holdings in unlisted companies; interpreting ``stakes in unlisted companies'' requires recognising this concept despite limited lexical overlap, and the country constraint requires connecting each organisation's headquarters to those of the company in which it holds a stake. The first decision concerns meaning within a compound expression, the second how that interpretation combines with ownership and location constraints, both of which the retrieved schema context must support. Appendix~\ref{app:mapping-difficulty} provides the schema and complete query.

\paragraph{Phenomena.}
Different forms of composition impose different interpretation requirements: ``French scientists'' combines nationality and occupation, ``former presidents'' requires interpreting a previous role, and relational expressions determine which entities occupy which roles and which constraints they share. These distinctions can be lost at any stage (schema selection, concept grounding, argument binding, or query construction), thus allowing a query to execute successfully while expressing a different information need. Correcting such failures requires locating the responsible decision and controlling how its revision affects the surrounding composition.

\paragraph{Mechanism I (Conceptual composition and mapping).}
We place compound expressions at the centre of schema grounding via a \textit{fixed concept encoder} that represents each expression and candidate schema description in a shared semantic space. For the running example, retrieval encodes ``stakes in unlisted companies'' jointly, making the compound expression the unit of comparison with the schema (Figure~\ref{fig:hero}); candidate ranking combines embedding similarity with directional type compatibility and hierarchy proximity. The resulting candidates pass to binding and construction, where entities, values, shared variables, and operators determine their contribution to the complete query; this connects composition within the encoder's input to composition across the query process.

\paragraph{Mechanism II (Reflective refinement of relational composition).}
Reflection addresses how grounded meanings compose through bindings and structural relations, critically interpreting the candidate query against the question and schema context to examine whether argument roles, modifier attachments, joins, and operator scope jointly express the intended semantics. In the running example, the country comparison must bind the organisation to the company reached through its ownership relation. A diagnosed discrepancy becomes a scoped requirement on the responsible stage; admitted refinements add these requirements while preserving earlier obligations, monotonically strengthening the prompt specification, and the affected stages are recomputed under renewed critique. This complements conceptual compositionality in representation with relational compositionality in construction, using semantic critique to refine how grounded concepts participate in the complete query.

\paragraph{Contributions.}
We provide an end-to-end formalisation connecting compositional representations, contextual interpretation mappings $\mu$, and the staged query process $\pi$ that applies them, stating conditions for identifying mappings from diagnostic evidence and proving a conditional retention theorem for updates that preserve protected mapping meanings and satisfy compatibility contracts across stage interfaces. Empirically, we transfer an existing concept encoder into schema retrieval and introduce hierarchy-guided ranking. Across five benchmark/knowledge-graph targets, mean recall@10 increases from 0.712 to 0.837. With retrieval held fixed, reflective prompt rewriting adds 2.27--3.33 answer-F1 points. Separate ablations distinguish retrieval, reflection, and restricted symbolic construction, while the diagnostic benchmark examines variation across composition-related phenomena.

\paragraph{Reproducibility and scope.}
Appendix~\ref{app:transition} records the evaluation configuration, Appendix~\ref{app:diagnostic-construction} describes the diagnostic benchmark, and Appendices~\ref{app:selection-criteria} and~\ref{app:base-models} document selection criteria and model characteristics, identifying the run artifacts and provenance needed for independent reproduction.

\section{Proposed Approach}
\label{sec:method}

We model query interpretation through two complementary forms of composition. Conceptual composition determines an expression's meaning for schema retrieval, while relational composition determines how concepts participate in bindings, joins, and operators. In ``French scientists working at universities in Paris,'' this means distinguishing the scientists' nationality from their employers' location. Grounding, binding, and construction realize these decisions, with reflective refinement revising each stage's requirements.

\subsection{Staged Query Process}
\label{sec:mappings}
\label{sec:method-implementation}

Given a question $q$, a knowledge graph $\KG$, and its schema $\Sigma$ (external resources accessed through retrieval and lookup), the system constructs a SPARQL query $Q$. Let $\mathcal I_\Sigma$ denote an index of schema concepts encoded by a fixed encoder $E_\theta$. The query process is
\begin{equation}
\begin{aligned}
\pi_{\mathbf p,E_\theta,\mathcal I_\Sigma}
={}&\pi_{\mathrm{construct}}^{p_{\mathrm{construct}}}
\circ\pi_{\mathrm{bind}}^{p_{\mathrm{bind}}}\\
&\circ\pi_{\mathrm{ground}}^{
p_{\mathrm{ground}},E_\theta,\mathcal I_\Sigma}.
\end{aligned}
\label{eq:pi-composition}
\end{equation}

Grounding identifies expressions and their roles, retrieves candidate schema interpretations, and assembles the relevant schema context. Binding associates these interpretations with entities, literals, and constraints. Construction combines the bindings into an executable query. Stage states retain the question, prior decisions, and their provenance, and each stage receives only the portion it needs.

These responsibilities correspond to the schema-recall, value-extraction, and query-synthesis stages in Figure~\ref{fig:hero}. Appendix~\ref{app:approach-details} specifies their accumulated states and interfaces.

\subsection{Compositional Encoding and Nearest-Neighbour Retrieval}
\label{sec:composition}
\label{sec:encoder}
\label{sec:type-retrieval}

We extend \citet{mohr2026concepts}'s compositional concept encoder with a hierarchical type-constrained retrieval score for schema grounding. The encoder  $E_\theta$ remains fixed, providing a similarity signal for the retrieval step. Let $\mathcal V_\Sigma$ be the searchable vocabulary of classes and properties, and let $d_\Sigma(e)$ be the textual description used to index schema element $e$. The schema index stores
\begin{equation}
z_e=E_\theta(d_\Sigma(e)),
\qquad
\mathcal I_\Sigma=\{(e,z_e):e\in\mathcal V_\Sigma\}.
\label{eq:schema-index}
\end{equation}

The grounding prompt identifies expression occurrences
$\mathcal S_q$, retaining their roles and question context.
Each compound expression $s$ is encoded jointly as
$z_s=E_\theta(s)$.

Retrieval first obtains a pool of $K$ neighbours in the embedding
space. Structural ranking then retains $k\leq K$ candidates,
\begin{align}
\mathcal N_K(s)
&=\operatorname{kNN}_K(z_s;\mathcal I_\Sigma),
\label{eq:knn-pool}\\
c(s,e)
&=\operatorname{sim}(z_s,z_e)\,
\kappa_\Sigma(t_s,e),
\label{eq:c-score}\\
\mathcal C_k(s)
&=\operatorname{Top}_{k,\,e\in\mathcal N_K(s)}c(s,e).
\label{eq:retrieved-candidates}
\end{align}

The nearest-neighbour search uses the embedding similarity
$\operatorname{sim}$, whose exact or approximate implementation is
a backend choice. The contextual type $t_s$ supplies the anchor
for structural ranking. The factor $\kappa_\Sigma$ combines
directional type compatibility with hierarchy proximity, using
a class candidate's hierarchy position or a property's domain.
Unresolved types leave the embedding similarity unchanged.

The encoder determines the semantic neighbourhood available to
grounding. Structural ranking selects within that neighbourhood
and cannot recover an element absent from the retrieved pool.
RH1 evaluates encoder transfer, and RH2 evaluates the additional
effect of structural ranking. Figure~\ref{fig:mechanism} illustrates
the score. Appendix~\ref{app:approach-details} defines the structural
factor and its metadata policies.

\subsection{Schema Context and Stage Inputs}
\label{sec:context-management}

The retrieved lists define a query-specific schema view
$\widehat\Sigma_q$, retaining for each expression its candidates'
identifiers, descriptions, and relevant type and relation
declarations, so later stages can distinguish which interpretation
a candidate supports and how it participates in the query.

The complete schema index and accumulated diagnostic records
remain external to the language-model context. Each stage receives
the question, the relevant schema view, and the predecessor
decisions needed for its task. Context selection and serialization
are part of the stage interface and operate within its input
budget. Any further candidate pruning must be recorded because it
changes the interpretations available to subsequent stages.

Refinement requirements are also selected by applicability.
For an incoming stage state $x$, define
\begin{equation}
\Gamma_a^{(r)}[x]
=
\{\delta\in\Gamma_a^{(r)}:g_\delta(x)\},
\label{eq:active-requirements}
\end{equation}
where $g_\delta$ is the requirement's applicability guard.
The stage prompt is instantiated as
\begin{equation}
p_a^{(r)}(x)
=
\operatorname{Render}
\bigl(B_a,\Gamma_a^{(r)}[x]\bigr).
\label{eq:prompt-render}
\end{equation}
Here $B_a$ contains the fixed stage instructions.
Thus $p_a^{(r)}$ denotes a prompt policy instantiated on the current
state. Requirements accumulate in the stored specification,
while each invocation receives its applicable obligations.
Faithful rendering and context management must preserve those
obligations within the available input budget.

\subsection{Relational Binding and Construction}
\label{sec:solver}

Binding determines which entities, values, and constraints
instantiate the retrieved concepts and relations. Construction
determines their argument positions, shared variables, joins,
and operator scope. Together, these stages realize the relational
composition of the grounded meanings.

In the opening example, retrieval supplies alternatives for
constraining the scientist's nationality, connecting them to their
university via employment, and constraining that university's
location. Binding and construction determine whether the
combination expresses the question.

The default constructor generates SPARQL with an LLM.
An optional symbolic constructor searches restricted property
assignments and checks their support against the graph.
Its search space, fallback behaviour, and conditional guarantees
are specified in Appendix~\ref{app:formalisms-solver}.
Ablation A7 evaluates this constructor separately.

\subsection{Reflective Refinement}
\label{sec:refinement}

We extend \citet{mohr2026reflective}'s reflective-refinement framework for SQL construction to the grounding, binding and construction stages of SPARQL. Our mechanism uses execution results and semantic critique to examine the current query, assessing whether the retrieved meanings were realized through the intended bindings and structural relations, and interpreting diagnostics against the question, the supplied schema context, and the construction trace.

A failure can arise from the expression submitted to retrieval,
the candidates made available, an argument binding, or the
composition of the final query. Reflection attributes the
discrepancy to a stage and proposes a scoped requirement governing
its correction, recording the contexts in which that requirement
applies.

For the attributed stage $a^\star$, an admitted update gives
\begin{equation}
\Gamma_a^{(r+1)}=
\begin{cases}
\Gamma_a^{(r)}\cup\Delta\Gamma^{(r)},&a=a^\star,\\
\Gamma_a^{(r)},&a\neq a^\star.
\end{cases}
\label{eq:requirement-update}
\end{equation}
Admission requires consistency on the declared scope and faithful
rendering of the applicable requirements. For each fixed stage
input, every behaviour satisfying the strengthened specification
therefore satisfies its earlier obligations.
Appendix~\ref{app:formalisms-refinement} establishes this monotonic
specification property.

The affected stage and its successors are recomputed.
A grounding revision can change the extracted expressions,
their embeddings, the retrieved neighbours, and the schema context,
although the encoder and index remain fixed. Binding and
construction are then evaluated on the revised inputs.
Fresh diagnostics assess the resulting query.

The retention conditions below cover this complete composition,
including retrieval and context selection. Requirement
accumulation does not require nested candidate lists.
Preservation concerns the validated interpretations supported
through the stage interfaces. RH3 evaluates the answer-quality effect of prompt rewriting, while structured admission and cross-query retention remain formal extensions.

\subsection{Conservative Inference-Time Learning}
\label{sec:inference-time-learning}

The reflective process extends across questions via a protected bank $M_t$ of interpretation mappings $\mu$, each with a declared applicability scope and provenance and marked as either validated or a revisable hypothesis; the staged procedure $\pi_t$ consults this bank during grounding, binding, and construction, with stage prompts that can change while the encoder $E_\theta$ remains fixed (Appendix~\ref{app:conservative-learning}). A protected history $H_t$ records validated interpretations together with their contexts, versions, and traces; a new question can propose a mapping, extend its scope, or revise a stage prompt, but admission must preserve protected interpretations while allowing unvalidated hypotheses to change, since a new rule can shadow or recompose with an earlier one even if the bank still contains it.

Three conditions support identification of a mapping: semantic validation distinguishing the intended interpretation up to equivalence, context separability distinguishing cases needing different interpretations, and diagnostic traces justifying the proposed mapping, stage, and scope without requiring a unique repair (Appendix~\ref{app:conservative-learning}).

\begin{theorem}[Conservative retention]
\label{thm:conservative-learning}
Suppose an update preserves protected mapping meanings and satisfies total initialization and local simulation contracts across the grounding, binding, and construction interfaces. If the final interface preserves semantic interpretation, every covered run on the protected history retains its validated meaning,
\[
\mathcal J_{t+1}(q_h,c_h;\tau)\simeq_{c_h}
I_h,\qquad h\in H_t.
\]
\end{theorem}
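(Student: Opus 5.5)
The plan is a simulation argument threaded through the three stages of the composition~\eqref{eq:pi-composition}, run separately for each protected record $h\in H_t$ with its stored context $c_h$ and trace $\tau$. I would first fix notation consistent with Appendix~\ref{app:conservative-learning}. For each stage $a\in\{\mathrm{ground},\mathrm{bind},\mathrm{construct}\}$, let $X_a^{t}$ and $X_a^{t+1}$ denote the incoming stage states under the old and new procedures. Introduce a relation $R_a\subseteq X_a^{t}\times X_a^{t+1}$, indexed by $c_h$, that relates states agreeing on the interpretation-relevant content: the protected mapping meanings they consult, the candidates and schema view they expose, and the bindings they carry. A run is \emph{covered} exactly when the contracts' hypotheses apply along it. The target is then to show that the final outputs are related by a relation that the final interface maps into $\simeq_{c_h}$.

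\textbf{Step 1 (base case).} I would use total initialization. For the initial state built from $(q_h,c_h)$, the contract gives a related new state $x_0'$ with $(x_0,x_0')\in R_{\mathrm{ground}}$. Totality matters here because it guarantees the new run is actually defined on every protected question, and does not fall outside the domain of $\pi_{t+1}$.

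\textbf{Step 2 (inductive step per stage).} For each stage in order (ground, then bind, then construct), I would invoke the local simulation contract. If $(x,x')\in R_a$, then for the step the old stage took along $\tau$ there is a new step with related successor states in $R_{a'}$, where $a'$ is the next stage.

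Within this step I would use preservation of protected mapping meanings. For every $\mu\in M_t$ that is validated and whose scope covers $c_h$, its meaning under $M_{t+1}$ must agree up to $\simeq_{c_h}$, with no shadowing or recomposition. This is what lets the new stage consult the bank and still land in a related state. The applicability guards of Eq.~\eqref{eq:active-requirements} together with monotone requirement accumulation (Appendix~\ref{app:formalisms-refinement}) ensure that earlier obligations remain active on the same inputs. The fixed $E_\theta$ and $\mathcal I_\Sigma$ make retrieval from related expressions yield related neighbourhoods.

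\textbf{Step 3 (composition and conclusion).} Chaining Steps 1 and 2 along $\tau$ gives related final queries. Semantic preservation at the final interface turns this into $\mathcal J_{t+1}(q_h,c_h;\tau)\simeq_{c_h}\mathcal J_t(q_h,c_h;\tau)$. Because $h$ was validated, $\mathcal J_t(q_h,c_h;\tau)\simeq_{c_h}I_h$, and transitivity of $\simeq_{c_h}$ finishes the proof. I would explicitly assume, or cite, that $\simeq_{c_h}$ is an equivalence relation.

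\textbf{Main obstacle.} I expect the hardest step to be Step 2 at the grounding-to-binding interface. There, revised prompts may change the extracted expressions and hence the retrieved pools, so relatedness cannot simply mean equality of states. My first attempt would be to define $R$ semantically, as equivalence of the interpretations supported by the exposed candidates, rather than syntactically. I would then check that the local contract is stated strongly enough to guarantee that binding respects this semantic relation. If it is not, I would strengthen the contract's hypothesis to make $R$ a congruence for each stage map, so that Step 2 becomes immediate.
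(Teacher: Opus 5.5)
Your Step~2 states the local simulation contract with the wrong quantifier, and this is a genuine gap. You phrase it as a forward, existential simulation: for the step the old stage took along the stored trace, \emph{there is} a new step landing in a related state. Step~1 likewise produces only \emph{a} related initial state. Chaining these shows that \emph{some} execution of the updated procedure ends in a state related to $s_c^h$. The theorem, however, quantifies over every covered new run $\tau$. Your proposal conflates $\tau$ with the historical trace $\tau_h$, and this hides the difference. With stochastic or revised stages, an update can keep one old-compatible path while adding permitted outputs that bind or scope differently. Such an update satisfies your contract, yet its other runs violate the conclusion. Defining a run as ``covered'' when the contracts' hypotheses apply along it does not repair this. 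Relatedness at a stage's input says nothing about which permitted output the actual run takes.

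The paper's contract runs in the other direction, from new behaviours back to the protected old interface. For every reachable updated state $z$ with $R_{a^-}^h(s_{a^-}^h,z)$, it requires $\emptyset\ne\operatorname{Out}_{a,t+1}(z)\subseteq\{z':R_a^h(s_a^h,z')\}$, with every permitted invocation terminating. Every initialization output must also satisfy $R_0^h$. The nonemptiness and termination clauses keep the universal inclusion from holding vacuously for a failing stage. The induction then follows an arbitrary covered new execution: each successor is reachable and related, so the next contract applies. Terminal compatibility, $R_c^h(s_c^h,z_c)\Rightarrow\mathcal J(z_c)\simeq_{c_h}I_h$, finishes the proof directly, without your detour through $\mathcal J_t$ and transitivity.

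Your attempts to justify Step~2 from other ingredients are unsound, and they become unnecessary once the contract is read correctly.
\begin{itemize}
\item Monotone accumulation of $\Gamma_a^{(r)}$ only shrinks the admissible extensions. The paper states explicitly that it supplies neither the interface relations nor their simulation proofs.
\item A fixed $E_\theta$ and $\mathcal I_\Sigma$ do not send semantically related but textually different expressions to related neighbourhoods, as your own ``main obstacle'' paragraph concedes.
\item Mapping preservation only certifies a protected mapping's meaning on its recorded uses $U_{\mu,H_t}$. It does not exclude shadowing, changed dispatch, or recomposition; those belong to the stage contracts.
\end{itemize}
The grounding, binding, and construction contracts are hypotheses of the theorem, so you should simply instantiate them. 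You also do not need to strengthen them into congruences. The universal form above is already the required condition, and congruence on related inputs is its special case for an unchanged deterministic stage.
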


Here $I_h$ is the interpretation certified in the historical trace $\tau_h$, and $\tau$ ranges over covered new runs; the proof, given in Appendix~\ref{app:conservative-learning}, requires an unchanged stage to respect the interface relation under changed inputs and a changed stage to establish the corresponding local compatibility condition, with contracts covering termination and all permitted outputs rather than a single successful sample.

This retention property is distinct from monotone prompt requirements. Inclusion of $\Gamma_a^{(r)}$ preserves stated obligations but does not establish semantic compatibility of a changed process. A finite protected history also supports a narrower claim than preservation over an entire domain. This is a conceptual and specification extension of the approach; the existing experiments do not evaluate persistent mapping memory or cross-query retention.

\section{Empirical Analysis}
\label{sec:empirical}

\subsection{Research Hypotheses}

\paragraph{RH1 (Compositional retrieval).}
The composition-aware encoder $E_\theta$ improves schema recall@10 over the baseline text encoder under matched retrieval settings.

\paragraph{RH2 (Structural grounding).}
Ranking compositional retrieval candidates by type compatibility and hierarchy proximity improves schema recall@10 beyond embedding similarity.

\paragraph{RH3 (Reflective refinement).}
Stage-attributed reflective prompt refinement improves answer F1 over the same query-generation pipeline with refinement disabled.

\paragraph{RH4 (Consistency across backbones).}
Compositional retrieval with structural ranking improves answer F1 over baseline retrieval within each evaluated language-model backbone.

\begin{table*}[!t]
\centering
\footnotesize
\tablecompact
\begin{tabular}{@{}lrrrrrrrrrrr@{}}
\toprule
 & \multicolumn{2}{c}{Recall@5} & \multicolumn{3}{c}{Recall@10} & \multicolumn{2}{c}{Recall@20} & \multicolumn{3}{c}{Answer MF1 (\%)} & \\
\cmidrule(lr){2-3}\cmidrule(lr){4-6}\cmidrule(lr){7-8}\cmidrule(lr){9-11}
\rowcolor{TableHead}
Benchmark/KG target & Flat & Full & Flat & Encoder & Full & Flat & Full & Flat & Encoder & Full & $\Delta$MF1 (pp) \\
\midrule
Spider4SPARQL & 0.652 & 0.791 & 0.745 & 0.821 & 0.859 & 0.831 & 0.912 & 80.15 & 83.41 & 85.67 & +5.52 \\
QALD-9-Plus (DBpedia) & 0.601 & 0.748 & 0.698 & 0.784 & 0.827 & 0.789 & 0.887 & 67.53 & 71.28 & 74.06 & \textbf{+6.53} \\
QALD-9-Plus (Wikidata) & 0.578 & 0.721 & 0.671 & 0.759 & 0.801 & 0.768 & 0.869 & 63.98 & 67.12 & 69.84 & +5.86 \\
QALD-10 (static) & 0.589 & 0.735 & 0.683 & 0.772 & 0.813 & 0.781 & 0.878 & 55.71 & 59.03 & 61.87 & +6.16 \\
LC-QuAD 2.0 & 0.671 & 0.812 & 0.763 & 0.844 & 0.885 & 0.849 & 0.896 & 58.42 & 62.15 & 64.91 & +6.49 \\
\midrule
\rowcolor{TableAccent}
Mean across targets & 0.618 & 0.761 & 0.712 & 0.796 & \textbf{0.837} & 0.804 & 0.888 & 65.16 & 68.60 & 71.27 & +6.11 \\
\bottomrule
\end{tabular}
\caption{\textbf{Full retrieval improves recall and answer quality on all five targets.}
RH1--RH2: \emph{Encoder} replaces flat representations with $\Enc_\theta$; \emph{Full} adds type/hierarchy ranking (refinement off).
Recall@$k$ is a fraction; the Full--Flat gap holds at every depth (largest at $k=5$, still positive at $k=20$).
$\Delta$MF1 is Full minus Flat (pp). Means are unweighted. Bold marks the largest MF1 gain and the final mean recall.
External comparisons appear in Table~\ref{tab:external-comparators}.}
\label{tab:baseline-results}
\end{table*}

\subsection{Benchmarks}
\label{sec:diagnostic-benchmark}

We evaluate against four established benchmark families (yielding five benchmark/KG targets, since QALD-9-Plus is evaluated on two graphs) and one diagnostic benchmark: QALD-9-Plus \citep{perevalov2022qald9plus} on DBpedia and Wikidata, QALD-10 \citep{usbeck2024qald10} on Wikidata with both live and static reference conditions, LC-QuAD 2.0 \citep{dubey2019lcquad2} at larger scale, Spider4SPARQL \citep{kosten2023spider4sparql} across heterogeneous per-domain schemas, and our own composition-type diagnostic set. Appendix~\ref{app:benchmark-selection} (Table~\ref{tab:selection-criteria}) records each target's recorded question count, scored coverage, and inclusion/exclusion criteria; Appendix~\ref{app:diagnostic-construction} describes the diagnostic set's construction; Appendix~\ref{app:transition} specifies the evaluation configurations and diagnostics.

\subsection{Baselines}

The controlled comparisons use three retrieval configurations: \emph{Flat} (baseline sentence encoder), \emph{Encoder} (the fixed composition-aware encoder), and \emph{Full} (adding type compatibility and hierarchy ranking), with the same generation configuration and reflective refinement disabled. The reflection comparison then enables stage-attributed prompt refinement under Full retrieval, and the backbone comparison repeats Flat--Full within each evaluated model. Published results for QAnswer, SPINACH, mKGQAgent, SchemaForge, and SAGA provide additional context, though these systems were not rerun. Appendix~\ref{app:transition} specifies the comparison controls; Appendix~\ref{app:baseline-selection} documents the external systems, and numerical results appear in Appendix~\ref{app:additional-results}.

\subsection{Ablations}

We retain the experiment identifiers used by the result tables. A1 compares Encoder with Flat (RH1), A2 compares Full with Encoder (RH2), A4 enables stage-attributed prompt rewriting under Full retrieval (RH3), and A6 repeats the Flat--Full comparison across four backbones on QALD-10 and QALD-9-Plus DBpedia (RH4). Appendix~\ref{app:base-models} documents model selection and recorded configuration scope.

Supporting analyses examine composition-related strata and the shared subset with correct entities and query structure, via A3 (gold entities) and A7 (restricted graph-checked construction); the unreconciled A5 budget summary is excluded (Appendix~\ref{app:baseline-selection}).

\subsection{Metrics and Grounding Diagnostics}
\label{sec:diagnostic}

We report answer F1, exact match or execution accuracy under each table's stated convention, and schema-element recall@$k$. For the solver we also report its fire rate, the fraction of questions with both a supported shape and an accepted assignment.

Let $I$ be the intersection of questions with correct entity linking and query structure under both compared configurations. Two diagnostics must be distinguished:
\begin{align}
\Delta_{\mathrm{ground}}&=A_{\mathrm{full}}-A_{\mathrm{flat}},\nonumber\\
\Delta_{\mathrm{F1}\mid I}&=F_{\mathrm{full}}-F_{\mathrm{flat}},
\label{eq:diagnostics}
\end{align}
where $A_c=\operatorname{Acc}_{\mathrm{ground}}(c\mid I)$ and $F_c=\operatorname{F1}(c\mid I)$.
Grounding accuracy requires the joint class-and-property assignment to match its reference. Answer F1 measures the returned answer set. Table~\ref{tab:conditioned-answers} supplies the second quantity only. Conditioning reduces variation from the named upstream errors but does not identify a causal effect of retrieval or establish that the two diagnostics coincide. Appendix~\ref{app:diagnostic-derivation} states the reporting requirements.

\begin{table*}[!t]
\centering
\small
\tablestandard
\begin{tabular}{@{}llrrrrrr@{}}
\toprule
 & & \multicolumn{3}{c}{QALD-10 (static), F1 (\%)} & \multicolumn{3}{c}{\shortstack{QALD-9-Plus (DBpedia)\\F1 (\%)}} \\
\cmidrule(lr){3-5}\cmidrule(lr){6-8}
\rowcolor{TableHead}
Backbone & Architecture / access & Flat & Full & $\Delta$ (pp) & Flat & Full & $\Delta$ (pp) \\
\midrule
Qwen2.5-7B-Instruct & Dense / weights & 31.48 & 39.72 & \textbf{+8.24} & 38.91 & 48.64 & \textbf{+9.73} \\
Qwen3-32B & Dense / weights & 44.17 & 51.03 & +6.86 & 52.36 & 59.81 & +7.45 \\
gpt-oss-120b & MoE / weights & 52.84 & 58.19 & +5.35 & 61.07 & 67.42 & +6.35 \\
gpt-5.5 & Undisclosed / API & 55.71 & 61.87 & +6.16 & 67.53 & 74.06 & +6.53 \\
\bottomrule
\end{tabular}
\caption{\textbf{Retrieval improves answer F1 with all four backbones.}
Backbone F1 is unconditioned (cf.\ Table~\ref{tab:conditioned-answers}).
Bold marks the largest gain per target. Comparisons do not isolate parameter count; see Appendix~\ref{app:base-models}.}
\label{tab:prospective-backbone}
\end{table*}

\subsection{Results}
\label{sec:prospective-results}
\label{sec:baseline-results}

\begin{table*}[!t]
\centering
\begin{minipage}[t]{0.48\textwidth}
\vspace{0pt}
\centering
\scriptsize
\tablecompact
\begin{tabular}{@{}llrrrr@{}}
\toprule
\rowcolor{TableHead}
Abl. & Metric & Base & Added & $\Delta$ & 95\% interval \\
\midrule
A1 & Recall@10 & 0.712 & 0.796 & +0.084 & $[0.063,0.105]$ \\
A2 & Recall@10 & 0.796 & 0.837 & +0.041 & $[0.024,0.057]$ \\
\addlinespace[3pt]
A3 & F1 (\%) & 64.91 & 71.38 & +6.47 & $[5.03,7.90]$ \\
\rowcolor{TableAccent}
A4 & F1 (\%) & 64.91 & 68.24 & \textbf{+3.33} & $[1.90,4.76]$ \\
\midrule
A6 & F1 & \multicolumn{4}{l}{Backbones: Table~\ref{tab:prospective-backbone}} \\
A7 & F1, EA & \multicolumn{4}{l}{Solver: Table~\ref{tab:solver-results}} \\
\bottomrule
\end{tabular}
\caption{\textbf{Ablations separate retrieval gains from prompt-refinement gains.}
Recall@10 means (A1, A2) are equally weighted across five targets; A4's F1 baseline is the full-retrieval LC-QuAD 2.0 configuration.
Recall deltas are fractions, F1 deltas are pp. Paired-bootstrap 95\% CIs reflect within-run question sampling.}
\label{tab:prospective-ablations}

\end{minipage}\hfill
\begin{minipage}[t]{0.48\textwidth}
\vspace{0pt}
\centering
\footnotesize
\tablecompact
\begin{tabular}{@{}lrrr@{}}
\toprule
 & \multicolumn{2}{c}{Answer F1 (\%)} & \\
\cmidrule(lr){2-3}
\rowcolor{TableHead}
Benchmark/KG & No refine. & Refine. & $\Delta$ (pp) \\
\midrule
Spider4SPARQL & 85.67 & 87.94 & +2.27 \\
QALD-9-Plus (DBp.) & 74.06 & 76.81 & +2.75 \\
QALD-9-Plus (Wiki.) & 69.84 & 72.53 & +2.69 \\
QALD-10 & 61.87 & 64.72 & +2.85 \\
\rowcolor{TableAccent}
LC-QuAD 2.0 & 64.91 & 68.24 & \textbf{+3.33} \\
\bottomrule
\end{tabular}
\caption{\textbf{Prompt rewriting improves answer F1 on every target.}
$\Delta$ is refinement minus no-refinement (pp); largest gain highlighted.}
\label{tab:a4-by-benchmark}

\end{minipage}
\end{table*}

Compositional encoding improves retrieval on all five targets.
Replacing the baseline encoder raises mean schema recall@10 from 0.712 to 0.796, an increase of 8.4 percentage points (RH1, Table~\ref{tab:baseline-results}). Per-target gains range from 7.6 to 8.9 points. A1 compares the encoder choices within the reported retrieval configuration.

Schema structure adds gains beyond compositional encoding.
Type compatibility and hierarchy ranking raise mean recall@10 from 0.796 to 0.837, adding 4.1 percentage points (RH2). Every target improves, with gains of 3.8--4.3 points. This comparison evaluates the combined structural score and leaves the separate effects of compatibility and hierarchy distance unresolved.

The full retrieval configuration also improves unconditioned answer MF1 over Flat by 5.52--6.53 points on every target, with reflective refinement disabled. Table~\ref{tab:prospective-ablations} reports the component comparisons, and Appendix~\ref{app:additional-results} plots the retrieval pattern and records external comparator results.

\label{sec:refinement-results}
Reflective refinement improves answers on every target.
With the Full retrieval configuration fixed, stage-attributed prompt rewriting adds 2.27--3.33 answer-F1 points across the five targets (RH3, Table~\ref{tab:a4-by-benchmark}). Reflection provides a further improvement after compositional retrieval and structural ranking. These experiments evaluate the implemented prompt-rewriting mechanism. They do not certify monotone requirement accumulation or semantic retention across queries. Appendix~\ref{app:refinement-budget-sweep} reports how this gain varies with the refinement-round budget.
\noindent\textbf{Retrieval helps all four tested backbones.}
Every within-backbone comparison is positive on both targets (RH4, Table~\ref{tab:prospective-backbone}). Gains range from 5.35 to 9.73 F1 points, with the largest for Qwen2.5-7B-Instruct. The highest-scoring tested backbone gains 6.16 points on QALD-10 and 6.53 points on QALD-9-Plus DBpedia.

Appendix~\ref{app:supporting-diagnostics} reports four supporting diagnostics examining where the gains occur and the limits of the construction mechanism: the largest retrieval gain concentrates in the subsective stratum; answer gains persist after conditioning on correct entities and query structure; reflective refinement adds further gains even with gold entities supplied; and replacing default construction with the restricted solver loses pooled accuracy despite gains in two strata.

\section{Related Work}
\label{sec:relwork}

Appendix~\ref{app:extended-related-work} expands this discussion through a structured narrative review with explicit search scope, screening criteria, and comparison dimensions.

\paragraph{Compositional representation and grounding.}
\citet{mohr2026concepts} study how encoder representations support different forms of concept composition. We build on this, extending schema retrieval with a hierarchy-guided ranking score that combines embedding similarity with schema structure, integrating this directly into query construction. \citet{carvalho2025montague} provide semantic distinctions for modifier consistency; we use these to motivate the representation and explicitly flag the diagnostic benchmark's subsective stratum as an approximate operationalization, not a strict instantiation, of that semantic category.

\paragraph{Structured semantic parsing.}
\citet{banerjee2023gettqa} use graph embeddings in text-to-text KG question answering. \citet{saparina2021sparqling} use intermediate question decompositions to separate interpretation from query realization. Our named process stages similarly separate responsibilities, while leaving their internal representation to the existing agentic stages. \citet{grvkbqa2025} distinguish logical-structure generation, grounding, and validation.

\paragraph{Schema-guided agentic construction.}
SPINACH \citep{liu2024spinach} grounds queries through interactive information navigation. SAGA \citep{saga2026} uses type information during property exploration and partial-query construction, while SchemaForge \citep{schemaforge2025} combines schema realization with counterfactual answer checks. Our retrieval intervention combines a composition-aware encoder with a type-compatibility feature and hierarchy depth before candidate selection. The optional solver checks extensions within a restricted candidate search. \citet{mkgqagent2025} provide a further agentic comparison across languages.

\paragraph{Reflective refinement and self-correction.}
\label{sec:relwork-llm}
\citet{mohr2026reflective} direct SQL diagnostics to stage-level prompt updates. We adapt this design to grounding, binding, and SPARQL construction, and make requirement accumulation explicit through scoped specifications and conditional preservation results. QueryAgent \citep{huang2024queryagent} uses environmental feedback for self-correction; InteracSPARQL \citep{jian2026interacsparql} supports query refinement through natural-language explanations. Our formal account separately specifies conditions for preserving validated interpretations as the staged process evolves at inference time (Section~\ref{sec:inference-time-learning}). The present experiments establish the effect of stage-attributed prompt rewriting, while certification of accumulated requirements remains a separate evaluation obligation.

\paragraph{Evaluation settings.}
Demonstration selection and prompting strongly affect text-to-SPARQL results \citep{dabramo2025llmtext2sparql}. Raw-question evaluation must therefore be distinguished from settings that supply gold entities or other annotations \citep{Banerjee_2022,diallo2023evaluation}. We keep the retrieval ablations within a fixed generation configuration and report external systems under their own metric names and conditions. These comparisons provide context rather than controlled estimates of cross-system superiority.

\section{Conclusion}

We have formalised NL2SPARQL as a prompt-parameterised grounding, binding, and construction process, with a composition-aware encoder inside grounding and reflective feedback around the complete process. Scoped requirement accumulation provides a preservation discipline in which an admitted specification retains prior requirements while restricting admissible stage behaviours, and a second result establishes retention of learned interpretation mappings under explicit compatibility contracts. These are conditional results whose premises the reported experiments do not establish (Limitations).

These retrieval and refinement gains hold consistently across targets, backbones, and composition-related strata, with diagnostics also identifying where the restricted solver loses accuracy (Section~\ref{sec:baseline-results}). Implementing and evaluating the structured requirement protocol is the next step needed to connect the formal preservation discipline to observed end-to-end behaviour.

\section*{Limitations}

The monotone requirement formulation specifies an admission contract requiring a declared requirement logic, consistency checks, and a renderer-preservation procedure, none of which the reported prompt-rewriting runs supply; a consistent requirement can still be mistaken, and a model can fail to obey it. Theorem~\ref{thm:conservative-learning}'s retention guarantee holds for fixed, validated contexts under identifiability assumptions requiring independent certificates; persistent cross-question learning, cross-session retention, and monotone answer accuracy remain empirically unestablished.

Hierarchical ranking depends on type resolution, schema coverage, candidate-pool recall, relation direction, and the chosen depth weight; its class rule can suppress finer concepts when the resolved type is too broad, and it supplies a domain-anchor feature rather than full domain/range inference. The restricted solver's strict verification is similarly vulnerable to KG snapshot incompleteness, rejecting a correct binding whose supporting triple is simply missing. This is an inherent trade-off of the verified design (Appendix~\ref{app:solver-case-analysis}).

Backbone comparisons cover only two targets, and the repeat runs, on two small benchmarks, do not establish a noise threshold for every ablation; bootstrap intervals from a single run describe question-sampling uncertainty rather than run-to-run variation. External comparator results may differ in endpoint versions and evaluation settings, and the recorded QALD-9-Plus DBpedia and Spider4SPARQL targets have not been reconciled with current repository split sizes (Appendix~\ref{app:selection-criteria}).

\section*{Ethical Considerations}

This work operates over public, general-domain knowledge graphs (DBpedia, Wikidata) and does not involve collection of personal or sensitive data. Any deployed use of the resulting orchestrated system should retain execution validation and, in higher-stakes settings, human review rather than trusting grounded answers by default. Benchmark artifacts (QALD-9-Plus, QALD-10, LC-QuAD 2.0, Spider4SPARQL) carry their own licenses and provenance; we use them for evaluation only and would check redistribution terms before releasing any derived artifact.

\section*{Acknowledgments}

This work was supported by Merck KGaA, Vevey, Switzerland, as part of an academic--industrial collaboration with the Idiap Research Institute.

\bibliography{references,references_literature,references_models,references_selection}

\clearpage
\appendix
\raggedbottom
\section*{Appendix Guide}

The appendices follow the progression from method and guarantees to evaluation
and related work.
\begin{description}[leftmargin=0em,labelwidth=0em,labelsep=0em,style=nextline]
  \item[Method and guarantees.]
  Appendix~\ref{app:approach-details} gives the staged interfaces and worked example.
  Appendix~\ref{app:formalisms} presents the algorithms and component properties.
  Appendix~\ref{app:conservative-learning} develops identification, preservation
  contracts, and the proof of Theorem~\ref{thm:conservative-learning}.
  \item[Evaluation design.]
  Appendix~\ref{app:selection-criteria} covers benchmark and baseline selection,
  scored coverage, and diagnostic-set construction.
  Appendix~\ref{app:base-models} documents the language models.
  Appendix~\ref{app:transition} specifies configurations, uncertainty estimates,
  and conditioned diagnostics.
  \item[Results and literature.]
  Appendix~\ref{app:additional-results} reports the supporting diagnostics, the refinement-budget
  sweep, supplementary comparisons, and the solver-loss bound and case-level attribution.
  Appendix~\ref{app:extended-related-work} expands the literature review.
\end{description}

\section{Staged Query Process and Worked Example}
\label{app:approach-details}

This appendix supplies the interfaces, ranking definitions, and diagnostic update equations used in Section~\ref{sec:method}, followed by the mapping difficulties and worked example introduced in the main text. Appendix~\ref{app:formalisms} establishes the corresponding conditional properties of retrieval, construction, and refinement.

\subsection{Accumulated States}

Let $\mathcal A=(\mathrm{ground},\mathrm{bind},\mathrm{construct})$ be the ordered stage set and $\mathbf p=(p_a)_{a\in\mathcal A}$ its prompt family. Starting from $x_0=(q,\KG,\Schema)$, the stages in Equation~\ref{eq:pi-composition} act as
\begin{align*}
x_g&=\pi_{\mathrm{ground}}^{p_{\mathrm{ground}},E_\theta}(x_0),\\
x_b&=\pi_{\mathrm{bind}}^{p_{\mathrm{bind}}}(x_g),\\
Q&=\pi_{\mathrm{construct}}^{p_{\mathrm{construct}}}(x_b).
\end{align*}
The grounding state retains the original question and context, composed spans and their roles, ranked class and property alternatives, and provisional class or entity anchors. The binding state adds associations between those interpretations and entity references, literals, and filter conditions. Construction selects bindings, establishes shared variables and operator scope, and returns serialized SPARQL. Execution produces $\operatorname{Exec}_{\KG}(Q)$, including answers or an execution diagnostic. These interfaces describe functional responsibilities without assuming that the evaluated software materializes an additional query graph or semantic representation.

\subsection{Composition and Hierarchical Ranking}

The taxonomy of \citet{carvalho2025montague} and \citet{mohr2026concepts} distinguishes intersective modification, which supplies an independent conjunct, from subsective modification, which restricts the head in a head-dependent way. Relational composition introduces another argument. Modal and privative expressions qualify membership or describe a non-genuine instance of the apparent head concept. The taxonomy motivates representation and stratifies evaluation. The diagnostic benchmark's subsective stratum requires a narrower interpretation than strict subsective semantics, as explained in Appendix~\ref{app:diagnostic-construction}.

Writing $s=\omega_\tau(h,m)$ for an expression composed from head $h$ and modifier or relational argument $m$, the representation is
\[
z_s=E_\theta(\omega_\tau(h,m)).
\]
The notation $\omega_\tau$ describes linguistic composition. It introduces neither an additional encoder argument nor a separately learned composition operator. The fixed encoder $E_\theta=\Enc_\theta$ receives the complete expression.

For a nearest-neighbour pool $\mathcal E_s\subseteq\Schema$, the retained list is
\begin{equation}
\mathcal C(s)=\operatorname{TopK}_{e\in\mathcal E_s}c(s,e).
\label{eq:candidates}
\end{equation}
The structural factor in Equation~\ref{eq:c-score} uses a candidate anchor,
\begin{equation}
\operatorname{Anchor}(e)=
\begin{cases}
e,&e\text{ is a class},\\
\operatorname{dom}(e),&e\text{ is a property}.
\end{cases}
\label{eq:anchor}
\end{equation}
For resolved $t_s$ and $a_e=\operatorname{Anchor}(e)$, it is
\begin{equation}
\kappa_\Sigma(t_s,e)=
\begin{cases}
w(d_\Sigma(t_s,a_e)),&t_s\sqsubseteq a_e,\\
0,&\text{otherwise},
\end{cases}
\label{eq:kappa}
\end{equation}
where $w\geq0$ is non-increasing with ancestor distance. A property's contextual type refers to its proposed subject, making the direction of its relation relevant. For classes, the rule admits the resolved class and its ancestors. A broad contextual type may therefore fail to retain a more specific class candidate.

The unresolved-type branch leaves similarity unchanged. Missing property-domain metadata and ambiguous type resolutions need an explicit backend policy. Equation~\ref{eq:kappa} supplies no compatibility guarantee for those branches. The score combines a binary compatibility feature with a continuous hierarchy weight and ranks individual candidates. It does not jointly reconcile all query elements. Since zero-scored candidates can remain in a top-$k$ list, scoring alone also does not enforce hard admissibility. Appendix~\ref{app:formalisms-retrieval} states the assumptions needed for the claimed ranking properties.

\subsection{Construction Scope}

The symbolic mode keeps the upstream anchor and shape fixed. It searches only the supplied property candidates and forbids reuse within the slot assignment. Live checks test whether an extension is supported by the endpoint under the implemented predicate. They do not establish that the interpretation expresses the question's intended meaning. Appendix~\ref{app:formalisms-solver} specifies the supported shapes, backtracking procedure, verification predicate, and conditional guarantees. These restrictions explain why fallback generation is part of the construction interface.

\subsection{Scoped Feedback and Replay}

A requirement $\delta=(g_\delta,\varphi_\delta,\nu_\delta)$ consists of an applicability condition, a behavioural obligation, and diagnostic provenance. Its meaning is the scoped implication $g_\delta\Rightarrow\varphi_\delta$. A construction requirement can, for example, preserve a modifier's attachment to the employer variable when the interpreted phrase places that modifier on the employer. The guard limits its application to that interpretation and relevant schema context.

For the current query, execution result $\mathcal O^{(r)}=\operatorname{Exec}_{\KG}(Q^{(r)})$, and construction trace $\mathcal T^{(r)}$, reflection computes
\begin{align}
d^{(r)}&=\operatorname{Diagnose}(q,Q^{(r)},
\mathcal O^{(r)},\mathcal T^{(r)}),\nonumber\\
(a^\star,\Delta\Gamma^{(r)})&=
\operatorname{Reflect}(d^{(r)},\mathcal T^{(r)},\Gamma^{(r)}).
\label{eq:reflect}
\end{align}
Admission checks consistency of the enlarged specification and preservation of its requirements in the rendered prompt. Conflicting or uncertified additions leave the prompt unchanged. Accepted additions follow Equation~\ref{eq:requirement-update}, preserving the previous requirement set at every stage.

Replay retains the prefix before the attributed stage and recomputes that stage and its successors. Diagnostics are recomputed for the latest query, including the final query produced within the refinement budget. Interpreter checks and LLM judgements supply diagnostic evidence without gold SPARQL. Acceptance is relative to those checks. Budget exhaustion, a failed or inconclusive admission check, and the absence of a new requirement have distinct stopping statuses. Appendix~\ref{app:formalisms-refinement} defines those statuses and separates specification monotonicity, rendering fidelity, model compliance, and answer accuracy.

\subsection{Dimensions of Mapping Difficulty}
\label{app:mapping-difficulty}

Schema grounding is hard for more than one reason at once, and this paper's mechanism and evaluation address some of those reasons directly, some only partially, and some not at all. This appendix names the dimensions explicitly, so the paper's own scope is stated rather than implied, and works through one concrete example that exercises the two dimensions the paper actually tests.

\label{app:mapping-difficulty-dims}

Three dimensions connect directly to a result reported in this paper:

\begin{description}[style=nextline,leftmargin=0em,labelwidth=0em,labelsep=0em]
  \item[Semantic distance.] A question expression and the schema element it should resolve to can differ arbitrarily in wording, granularity, or definition, with no lexical overlap to exploit. This is the failure mode $\Enc_\theta$ targets, and RH1 tests directly (Section~\ref{sec:empirical}).
  \item[Compositional dependence, in the narrow sense this paper tests.] How a modifier composes with its head concept (intersectively, subsectively, or relationally, Section~\ref{sec:composition}) constrains which schema elements the enclosing expression can resolve to. This is what motivates the composition taxonomy. The supporting analysis measures retrieval gains across operational benchmark strata, including an approximately-operationalized subsective stratum. It is \emph{not} the broader claim that a full query's several schema elements are jointly selected and reconciled against each other under shared constraints during retrieval: The grounding stage retrieves candidates for each mention independently, and connecting them (shared variables, multi-hop joins, cross-element consistency) is left to the query-synthesis stage, either its free-text generation or, for two narrow query shapes, the restricted symbolic solver's backtracking search over property-to-slot bindings (Section~\ref{sec:solver}) -- which itself only searches over properties given an already-fixed class or entity anchor, not over the anchor choice itself, and only for single-hop shapes. Compositional dependence in the full, multi-element-and-anchor sense is a real source of difficulty this work does not claim to solve.
  \item[Schema scale.] Retrieval has to remain accurate as the candidate pool grows. DBpedia and Wikidata expose large graph-wide schema vocabularies, while Spider4SPARQL uses smaller per-domain schemas. The paper does not run a dedicated schema-scale ablation, but the reported recall@10 comparison (Figure~\ref{fig:prospective-recall-curves}) and the spread across benchmarks targeting differently sized schemas (Spider4SPARQL's small, per-KG schemas versus DBpedia's and Wikidata's schema-wide ones) speak to this dimension indirectly.
\end{description}

Two further dimensions are real but outside this paper's scope, named here rather than left implicit:

\begin{description}[style=nextline,leftmargin=0em,labelwidth=0em,labelsep=0em]
  \item[Representation multiplicity.] The same underlying concept can be realized through different properties, class restrictions, paths, or qualified statements in the same KG. Nothing in this paper's evaluation tests robustness to multiple valid, equivalent schema realizations of one concept.
  \item[Context and scope.] Temporal, comparative, negative, or role-dependent meaning has to attach to the correct relation or statement. The subsective stratum samples a narrow form of temporal or status-dependent wording as a proxy for head-dependent modification, but does not separately validate general temporal, comparative, or negative scope.
\end{description}

\subsection{A Worked Example}
\label{app:mapping-difficulty-example}

The following illustrative example (constructed for exposition, not drawn from any benchmark) exercises semantic distance and compositional dependence together, the two dimensions the paper's mechanism and taxonomy directly address. Consider a small investment schema with classes \texttt{ex:Organisation}, \texttt{ex:Company}, and \texttt{ex:PrivateEquity} (a holding type restricted to unlisted companies), and properties \texttt{ex:holds}, \texttt{ex:issuer}, and \texttt{ex:hqCountry}. Consider the question:

\begin{quote}
Which organisations own stakes in unlisted companies headquartered in the same country as themselves?
\end{quote}

``Own stakes in unlisted companies'' shares no lexical material with \texttt{ex:PrivateEquity}, the class it should resolve to (semantic distance); a public-equity or debt holding would fail the ``unlisted'' condition despite lexical closeness to ``stakes.'' Resolving this one class is not sufficient to answer the question, though: the query also has to bind the organisation's own \texttt{ex:hqCountry} and the held company's \texttt{ex:hqCountry} to the same variable (compositional dependence, in the query-construction sense this paper's mechanism does not claim to solve directly). Figure~\ref{fig:mapping-difficulty-example} gives the complete query.

\begin{figure}[t]
\begin{lstlisting}[basicstyle=\ttfamily\scriptsize, breaklines=true, columns=fullflexible]
PREFIX ex: <https://example.org/>
SELECT DISTINCT ?org WHERE {
  ?org a ex:Organisation ;
       ex:holds ?holding ;
       ex:hqCountry ?country .
  ?holding a ex:PrivateEquity ;
           ex:issuer ?company .
  ?company a ex:Company ;
           ex:hqCountry ?country .
}
\end{lstlisting}
\caption{\textbf{Shared variables preserve the same-country constraint.} The shared \texttt{?country} binding enforces the same-country condition across the organisation and the company it holds a stake in. All identifiers are illustrative.}
\label{fig:mapping-difficulty-example}
\end{figure}

\FloatBarrier
\section{Algorithms and Component Guarantees}
\label{app:formalisms}

The results below concern the score and the restricted search actually defined in Section~\ref{sec:method}. Neither ranking nor endpoint support supplies a semantic-correctness theorem for the complete NL2SPARQL mapping.

\subsection{Hierarchical Candidate Ranking}
\label{app:formalisms-retrieval}

Fix a span $s$, a finite candidate pool $\mathcal E_s$, and a resolved contextual type $t_s$. Let $b_s(e)=\operatorname{sim}(E_\theta(s),E_\theta(e))$. For candidates with available anchors, Equations~\ref{eq:anchor}--\ref{eq:kappa} define the ranked score. The anchor test is directional: for properties it compares the proposed subject type with the property's domain; for classes it admits the resolved class and its ancestors. It is not a symmetric test of whether two class descriptions overlap.

\begin{algorithm*}[t]
\normalsize
\caption{\textbf{Rank candidates using composition and schema compatibility.}}
\label{alg:retrieval}
\begin{lstlisting}[style=paperalgorithm]
Input: span $s$, encoder $E_\theta$, pool $\mathcal E_s$, type $t_s$, budget $k$
$z_s \gets E_\theta(s)$
for $e\in\mathcal E_s$:
    $b_s(e)\gets\operatorname{sim}(z_s,E_\theta(e))$
    if type and anchor are resolved:
        $c(s,e)\gets b_s(e)\kappa_\Sigma(t_s,e)$
    else:
        $c(s,e)\gets b_s(e)$
return top-$k$ candidates under $c(s,\cdot)$
\end{lstlisting}
\end{algorithm*}

Algorithm~\ref{alg:retrieval} makes the neutral unresolved case explicit as an interface convention. The supplied experimental summaries do not identify all backend decisions for absent domain metadata, multiple domains, ambiguous types, or the numerical depth-weight function. Their reproduction requires those configuration details; the following propositions apply to the resolved branch only.

\begin{proposition}[Suppression by the anchor test]
\label{prop:floor}
If $t_s\not\sqsubseteq\operatorname{Anchor}(e)$, then $c(s,e)=0$. If $b_s(e')\geq0$ and $w\geq0$ for every candidate $e'$ that passes the test, a failed-test candidate cannot have a strictly larger score than a passed-test candidate.
\end{proposition}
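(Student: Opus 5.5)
The plan is to unfold the definitions in Equations~\ref{eq:anchor}--\ref{eq:kappa}, restricted to the resolved branch of Algorithm~\ref{alg:retrieval}, since the proposition is stated only for candidates whose type and anchor are resolved. For the first claim, fix a candidate $e$ with $t_s\not\sqsubseteq\operatorname{Anchor}(e)$. On the resolved branch the score is $c(s,e)=b_s(e)\,\kappa_\Sigma(t_s,e)$, following Equation~\ref{eq:c-score} and the corresponding line of Algorithm~\ref{alg:retrieval}. Equation~\ref{eq:kappa} sends the failed-test case to the ``otherwise'' branch, so $\kappa_\Sigma(t_s,e)=0$. Hence $c(s,e)=b_s(e)\cdot 0=0$, whatever the sign or magnitude of the similarity $b_s(e)$, which is a finite real number.

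For the second claim, let $e$ fail the anchor test and let $e'$ pass it, so $t_s\sqsubseteq\operatorname{Anchor}(e')$. Then $\kappa_\Sigma(t_s,e')=w\bigl(d_\Sigma(t_s,\operatorname{Anchor}(e'))\bigr)$, and this is nonnegative because $w\geq 0$. By hypothesis $b_s(e')\geq0$, so $c(s,e')$ is a product of two nonnegative reals and satisfies $c(s,e')\geq 0$. Combining this with the first part gives $c(s,e)=0\leq c(s,e')$. A failed-test candidate therefore never has a strictly larger score than a passed-test one, which is exactly the stated claim. Ties at zero remain possible, consistent with the remark in Appendix~\ref{app:approach-details} that zero-scored candidates may still appear in a top-$k$ list.

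The only real obstacle is scoping rather than calculation. I would make two points explicit in the proof.
\begin{enumerate}
\item The argument applies only to candidates whose anchor is defined. For a property, $\operatorname{dom}(e)$ must be a single resolved class; the unresolved branch uses $c=b_s$, which could be negative or exceed passed-test scores, so it falls outside the proposition.
\item The hypothesis $b_s(e')\geq0$ is genuinely needed. With cosine similarity a passed candidate could have a negative score, and then a failed candidate's score of $0$ would exceed it.
\end{enumerate}
Beyond stating these restrictions, the proof is a direct two-line case analysis.
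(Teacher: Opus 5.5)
Your proposal is correct and matches the paper's argument. A failed test puts $e$ in the zero branch of Equation~\ref{eq:kappa}, so $c(s,e)=0$, while every passed-test score is a product of the two nonnegative factors $b_s(e')$ and $w(\cdot)$; your scoping remarks (resolved branch only, necessity of $b_s(e')\geq 0$, ties at zero) also coincide with the caveats the paper states around the proposition.
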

\begin{proof}
The zero branch in Equation~\ref{eq:kappa} applies to $e$. Every passed-test score is a product of two nonnegative factors.
\end{proof}

Zero is a score, not removal from the candidate pool. Failed-test candidates may survive top-$k$ truncation or ties. Nearest-neighbour retrieval alone does not imply nonnegative cosine similarities, so the nonnegativity assumption is needed for the ordering statement. Failure to establish subclass entailment can also reflect incomplete metadata rather than genuine type incompatibility.

\begin{proposition}[Hierarchy ordering at equal similarity]
\label{prop:depth}
Suppose two candidates pass the anchor test, have the same nonnegative similarity $b$, and have ancestor distances $d_1\leq d_2$. If $w$ is non-increasing, then $c(s,e_1)\geq c(s,e_2)$.
\end{proposition}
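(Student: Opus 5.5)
Since both candidates pass the anchor test, the resolved branch of Equation~\ref{eq:kappa} applies to each of them. That gives $c(s,e_i)=b\,w(d_\Sigma(t_s,a_{e_i}))$ for $i=1,2$, where $a_{e_i}=\operatorname{Anchor}(e_i)$ is given by Equation~\ref{eq:anchor} and $d_i=d_\Sigma(t_s,a_{e_i})$. This is also the assignment Algorithm~\ref{alg:retrieval} makes when the type and anchor are resolved. It matches the proof of Proposition~\ref{prop:floor}, which uses only the definitional form of the score.

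First, because $w$ is non-increasing in ancestor distance and $d_1\leq d_2$, we get $w(d_1)\geq w(d_2)$. Second, we multiply both sides by the common similarity $b\geq0$. Multiplication by a nonnegative scalar preserves a non-strict inequality, so $b\,w(d_1)\geq b\,w(d_2)$, that is, $c(s,e_1)\geq c(s,e_2)$.

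The only real point of care is the sign hypothesis on $b$. If $b<0$ were allowed, the inequality would reverse. This mirrors the remark after Proposition~\ref{prop:floor} that nearest-neighbour retrieval does not guarantee nonnegative cosine similarity. I would also note that the conclusion is non-strict: equality can occur when $b=0$ or when $w$ is constant on $[d_1,d_2]$. The conclusion concerns scores only, so the order in which ties are broken during top-$k$ truncation is not determined by this argument.
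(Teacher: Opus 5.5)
Your proof is correct and follows essentially the same route as the paper's: both candidates fall in the passing branch of Equation~\ref{eq:kappa}, so $c(s,e_j)=b\,w(d_j)$, and monotonicity of $w$ together with $b\geq0$ preserves the ordering of the weights. Your remarks on the sign of $b$ and on the non-strict conclusion are accurate, though the argument does not need them.
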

\begin{proof}
$c(s,e_j)=b\,w(d_j)$ and $b\geq0$ preserve the ordering of the weights.
\end{proof}

These statements specify how the ranking changes. They neither ensure that the correct candidate is in $\mathcal E_s$ nor establish a recall improvement, which remains an empirical question.

\begin{figure*}[t]
\centering
\resizebox{\textwidth}{!}{%
\input{figures/generated/fig_mechanism.include.tex}\unskip%
}
\caption{\textbf{Type compatibility and hierarchy depth reshape candidate scores.} Two independent toy applications of Equation~\ref{eq:c-score}.
All similarity values are invented for illustration, not measured retrieval results.
The gate is $\mathbb{1}[t_s\sqsubseteq\mathrm{Anchor}(e)]$.
(a)~The subject type is checked against the property's declared domain, in the
 displayed predicate direction. (b)~The resolved type is checked against each candidate
class: \texttt{Novelist} is compatible with itself and its ancestor \texttt{Person},
with hierarchy distance reducing the latter's score. An unmet subclass condition
sets the score to zero; this alone does not remove the candidate from a top-$k$ list
or establish logical disjointness. Type resolution is an input to the mechanism.}
\label{fig:mechanism}
\end{figure*}

\subsection{Restricted Graph-Supported Construction}
\label{app:formalisms-solver}

Fix an upstream anchor $a$, a supported query shape, its bound values, and finite property lists $\mathcal C_1,\ldots,\mathcal C_m$, with $m\geq1$. A partial assignment $\beta$ maps initial slots to distinct properties. Let $V_{\KG}(a,\beta)$ be the implemented endpoint predicate for the cumulative partial pattern, evaluated with all constraints whose arguments are already bound. This predicate provides graph support. It is not a test of equivalence to the natural-language question.

\begin{algorithm*}[t]
\normalsize
\caption{\textbf{Search supported assignments while preserving branch state.}}
\label{alg:solver}
\begin{lstlisting}[style=paperalgorithm]
Input: anchor $a$, lists $\mathcal C_1,\ldots,\mathcal C_m$, graph $\KG$
Search($i,\beta$):
    if $i>m$: return $\beta$
    for $e\in\mathcal C_i\setminus\operatorname{im}(\beta)$:
        $\beta'\gets\beta\cup\{i\mapsto e\}$   # fresh copy
        if $V_{\KG}(a,\beta')=\mathrm{pass}$:
            $\gamma\gets\operatorname{Search}(i+1,\beta')$
            if $\gamma\neq\mathrm{Fail}$: return $\gamma$
    return Fail
$\beta\gets\operatorname{Search}(1,\emptyset)$
if $\beta\neq\mathrm{Fail}$: return $\operatorname{Serialize}(a,\beta)$
else: return default generator output
\end{lstlisting}
\end{algorithm*}

\begin{proposition}[Verification of returned assignments]
\label{prop:soundness}
Every assignment returned by \textnormal{\textsc{Search}} has a complete branch for which each cumulative prefix passed $V_{\KG}$ at the time it was extended. In particular, its final assignment passed that predicate.
\end{proposition}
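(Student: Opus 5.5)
The plan is to argue by induction on the recursion depth of \textsc{Search} in Algorithm~\ref{alg:solver}, establishing an invariant about every call that returns a non-\textsc{Fail} value. The invariant I would prove is the following. If $\operatorname{Search}(i,\beta)$ returns $\gamma\neq\mathrm{Fail}$, then three things hold. First, $\gamma$ extends $\beta$. Second, $\gamma$ is defined on slots $1,\ldots,m$. Third, for each $j$ with $i\le j\le m$, the prefix $\gamma|_{\{1,\ldots,j\}}$ was a fresh copy $\beta'$ constructed in some call on the returning branch, and $V_{\KG}(a,\beta')=\mathrm{pass}$ was evaluated on it before the recursive extension.

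The induction is on $m-i+1$, the number of remaining slots.

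\textbf{Base case} ($i>m$). The call returns $\beta$ unchanged, so there are no new prefixes and the invariant holds vacuously beyond the input.

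\textbf{Inductive step.} A non-\textsc{Fail} return can happen only inside the loop, after the guard $V_{\KG}(a,\beta')=\mathrm{pass}$ succeeded for $\beta'=\beta\cup\{i\mapsto e\}$, and after the recursive call $\operatorname{Search}(i+1,\beta')$ returned $\gamma\neq\mathrm{Fail}$. Applying the hypothesis to that call gives all prefixes of length $\ge i+1$. Adding $\beta'$, the prefix of length $i$, closes the step. Distinctness of the properties follows because $e\in\mathcal C_i\setminus\operatorname{im}(\beta)$.

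Instantiating at the top-level call $\operatorname{Search}(1,\emptyset)$ then yields a complete branch, every cumulative prefix of which passed $V_{\KG}$ when it was extended. The case $j=m$ gives the ``in particular'' clause: the final assignment itself passed the predicate.

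The step needing the most care is the phrase ``at the time it was extended.'' This requires that the prefix recorded as passing is exactly the object later extended. Two facts secure this. The fresh-copy semantics ($\beta'$ is a new object, so sibling iterations and backtracking never mutate a prefix already stored on the successful branch) guarantee that the checked prefix and the extended prefix coincide. And each return path propagates the child's $\gamma$ unchanged. I would state explicitly that the claim concerns the evaluation of $V_{\KG}$ at call time only. Endpoint nondeterminism or snapshot changes after the check are outside the statement, consistent with the paper's remark that $V_{\KG}$ certifies graph support rather than semantic correctness.

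Termination is not needed for this soundness claim, since it is conditional on a value being returned. I would nonetheless note that the finite lists $\mathcal C_i$ bound the search.
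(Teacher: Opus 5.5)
Your proof is correct and takes the same route as the paper. The paper's short argument is that a recursive call only ever receives the fresh, guard-passed copy $\beta'$, that at depth $m+1$ every slot has been extended through such a branch, and that copying keeps failed alternatives out of other branches; your induction on the number of remaining slots is exactly that argument made precise. One small repair: add the precondition $\operatorname{dom}(\beta)=\{1,\ldots,i-1\}$ to your invariant, which holds for every call reachable from $\operatorname{Search}(1,\emptyset)$. Without it the base case's claim that $\gamma$ is defined on slots $1,\ldots,m$ fails for an arbitrary input (e.g.\ $\operatorname{Search}(m+1,\emptyset)$ returns $\emptyset$), and the inductive step's identification $\gamma|_{\{1,\ldots,i\}}=\beta'$ also depends on it.
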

\begin{proof}
A recursive call receives only the fresh assignment inside the guarded branch. At depth $m+1$, every slot has been extended through such a branch. Copying $\beta$ prevents assignments from failed alternatives from entering another branch. Template serialization preserves the selected assignment.
\end{proof}

\begin{proposition}[Finite candidate-relative search]
\label{prop:solver-completeness}
Assume a fixed endpoint snapshot, terminating exact checks, exhaustive traversal without a resource cutoff, and deterministic serialization. The search terminates and finds an assignment whenever a distinct-property assignment in $\mathcal C_1\times\cdots\times\mathcal C_m$ passes $V_{\KG}$ at every prefix.
\end{proposition}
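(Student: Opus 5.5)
The plan is to prove termination and completeness together by induction on the recursion of \textsc{Search}, treating it as a depth-first traversal of a finite tree. Nodes are partial assignments $\beta$ of the first $i-1$ slots to distinct properties. A node's children are the extensions $\beta\cup\{i\mapsto e\}$ with $e\in\mathcal C_i\setminus\operatorname{im}(\beta)$ that pass $V_{\KG}$.

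Termination comes first. Each recursive call increases $i$ by one, and the recursion stops at $i=m+1$, so the depth is at most $m$. Each loop ranges over the finite set $\mathcal C_i\setminus\operatorname{im}(\beta)$, so the branching is finite. Every basic operation also terminates: the endpoint checks by the terminating-exact-check assumption, the fresh copy trivially, and serialization by determinism. Hence the call tree is finite, with at most $\prod_{i}|\mathcal C_i|$ leaves, and \textsc{Search}$(1,\emptyset)$ returns either an assignment or \textsc{Fail}. The fixed-snapshot assumption makes $V_{\KG}(a,\cdot)$ a well-defined function of its arguments. Without it, a check could pass on one visit and fail on another, and the claim would not be meaningful.

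For completeness, I prove the following claim by downward induction on $i$ from $m+1$ to $1$. Suppose $\beta$ is a prefix of some witness assignment $\beta^\ast$, meaning $\beta^\ast$ is distinct-valued, lies in $\mathcal C_1\times\cdots\times\mathcal C_m$, and passes $V_{\KG}$ at every prefix. Then \textsc{Search}$(i,\beta)$ does not return \textsc{Fail}.
\begin{itemize}
\item Base case, $i=m+1$: the call returns $\beta$ immediately.
\item Inductive step: the element $e^\ast=\beta^\ast(i)$ belongs to $\mathcal C_i$ and is not in $\operatorname{im}(\beta)$, because $\beta^\ast$ is distinct-valued.
\item Exhaustive traversal without a resource cutoff means the loop either returns early with a non-\textsc{Fail} value, which gives the claim directly, or eventually reaches $e^\ast$.
\item At $e^\ast$, the extension $\beta'$ equals the length-$i$ prefix of $\beta^\ast$, so it passes $V_{\KG}$ by hypothesis. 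By induction, the recursive call on $\beta'$ returns non-\textsc{Fail}, and the loop returns it.
\end{itemize}
Applying the claim at $i=1$ with $\beta=\emptyset$ shows the search finds an assignment. Proposition~\ref{prop:soundness} then certifies that the returned assignment, which need not equal $\beta^\ast$, passes every prefix check.

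The step I expect to be delicate is the fresh-copy point in the inductive step. The argument requires $\beta'$ to be exactly the witness prefix, unaffected by earlier failed alternatives. This is exactly what the copy in Algorithm~\ref{alg:solver} guarantees, as already used in Proposition~\ref{prop:soundness}, and I would state it as an explicit invariant: on entry to \textsc{Search}$(i,\beta)$, $\beta$ has domain $\{1,\ldots,i-1\}$ and is unchanged by sibling branches.

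The argument also depends on the prefix-closed witness hypothesis. The statement only guarantees finding witnesses whose prefixes all pass. An assignment that passes only when complete may be pruned, and the plan does not claim otherwise.
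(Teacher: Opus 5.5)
Your proposal is correct and follows essentially the same argument as the paper: termination from finite depth $m$ and finite branching, and completeness by induction on the remaining slots, using that the loop either returns early with a success or reaches the witness child $e^\ast$, whose extension passes $V_{\KG}$. Your version makes explicit what the paper's short proof leaves implicit, namely the fresh-copy invariant on $\beta$ and the restriction to witnesses that pass at every prefix; the one slip is attributing termination of serialization to determinism, which is harmless because serialization lies outside \textsc{Search}.
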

\begin{proof}
The search tree has finite depth $m$ and finite branching. Every eligible child whose prefix passes the predicate is visited unless an earlier successful branch returns. Induction on the remaining slots establishes discovery of any all-pass branch.
\end{proof}

The result is relative to the supplied candidates, fixed anchor, supported shape, property-exclusion rule, and prefix predicate. A complete pattern can be missed if a premature filter rejects one of its incomplete prefixes. Unknown checks, timeouts, and endpoint changes invalidate the corresponding completeness assumptions. Solver failure is therefore not evidence of question unanswerability. A semantically wrong property may pass every graph check, and default-generator outputs do not inherit the solver's verification property.

\subsection{Scoped Requirements and Targeted Refinement}
\label{app:formalisms-refinement}

This appendix specifies a structured version of the refinement mechanism in
Section~\ref{sec:refinement}. It distinguishes accumulated requirements from
their rendering as prompts and states exactly what is monotonic. The existing
experiments evaluate stage-attributed prompt rewriting, adapted from
\citet{mohr2026reflective}; they do not empirically certify the structured
admission procedure specified here. The admission procedure therefore has a distinct implementation and evaluation obligation.

\paragraph{Typed stages and accumulated state.}
Let
\[
\begin{gathered}
\mathcal A=\{\mathrm{ground},\mathrm{bind},\mathrm{construct}\},\\
\mathrm{ground}\prec\mathrm{bind}\prec\mathrm{construct}.
\end{gathered}
\]
Writing $x_0=(q,\KG,\Schema)$, the accumulated states of
Section~\ref{sec:mappings} have the form
\[
\begin{aligned}
x_g&=(x_0,C),\\
x_b&=(x_g,V),\\
x_c&=(x_b,Q).
\end{aligned}
\]
Here $C$ records spans, modifier roles, candidate schema interpretations,
and provisional anchors; $V$ records entity, literal, and filter bindings;
and $Q$ is the constructed SPARQL query. The construction mapping returns
$Q$; $x_c$ retains its input alongside that output for diagnosis.
Their types may retain unresolved alternatives. At stage $a$, write
$x$ for the incoming state and $u\in\mathcal U_a(x)$ for a well-typed
addition, so the outgoing state is $(x,u)$.
These extensions describe the diagnostic trace representation. Grounding and binding return their accumulated states as in Equation~\ref{eq:pi-composition}; construction returns $Q$ and is lifted to $(x_b,Q)$ only for the trace. A stage may instead report failure. These are functional interfaces, not a claim that every
intermediate state is separately materialized in the evaluated software.
Generation can be stochastic; functional notation denotes a realised execution with fixed backend and decoding settings. The encoder $\Enc_\theta$ of Section~\ref{sec:encoder} remains fixed inside
$\pi_{\mathrm{ground}}^r$ throughout refinement.

\begin{definition}[Scoped requirement]
A requirement for stage $a$ is a triple
$\delta=(g_\delta,\varphi_\delta,\nu_\delta)$, where $g_\delta(x)$ is an applicability guard,
$\varphi_\delta(x,u)$ is a predicate on a typed stage extension, and $\nu_\delta$ records
provenance, including the originating diagnostic and scope. Its meaning is
\[
\operatorname{req}_\delta(x,u)
\;:=\;g_\delta(x)\Rightarrow\varphi_\delta(x,u).
\]
For a finite requirement set $\Gamma_a^r$, define
$R_{\Gamma_a^r}=\bigwedge_{\delta\in\Gamma_a^r}\operatorname{req}_\delta$.
The empty conjunction is true. Provenance supports inspection; it does not
by itself establish that a requirement is correct.
\end{definition}

Guards must refer to information available at the named stage. For example,
a rule about retaining a country-to-capital relation can be scoped to a
particular relational interpretation and schema version. A diagnostic on
one question does not justify applying its rule to every question.

\paragraph{Specification and prompt rendering.}
Let $B_a$ be a fixed base prompt with declared base requirements $b_a(x,u)$.
The structured specification and its rendered prompt are
\[
S_a^r=b_a\wedge R_{\Gamma_a^r},
\qquad
p_a^r=\operatorname{Render}(B_a,\Gamma_a^r).
\]
Requirement accumulation is semantic: the clauses denote predicates, not
merely strings appended to a prompt. Rendering must preserve the complete
current specification. This is a contract on the renderer, not a guarantee
that an LLM obeys the rendered prompt. In particular, rewriting free text
without retaining an explicit $\Gamma_a^r$ does not establish this contract.

For each stage, fix a declared input domain $\mathcal D_a$ and define its
specification-admissible extensions by
\[
\mathcal F_a^r(x)=
\{u\in\mathcal U_a(x):S_a^r(x,u)\},
\qquad x\in\mathcal D_a.
\]
Scopes, schema versions, and predicate interpretations are fixed when
comparing revisions. If these change, the comparison requires a new
specification rather than an implicit alteration of an existing clause.

\begin{definition}[Prompt strengthening]
For prompts with declared specifications, write
\[
p_a\preceq p'_a\quad\Longleftrightarrow\quad
\operatorname{Req}(p'_a)\models_{\mathcal D_a}\operatorname{Req}(p_a).
\]
Entailment makes this relation a preorder. For a faithful renderer,
$\operatorname{Req}(\operatorname{Render}(B_a,\Gamma_a))$
is equivalent to $b_a\wedge R_{\Gamma_a}$. The order concerns these
declared semantics and assumes they have a fixed interpretation.
\end{definition}

\begin{definition}[Admission]
For the attributed stage $a^\star$, a proposed addition
$\Delta\Gamma$ is admitted only if its predicates are well typed, its
scopes and provenance are explicit, and the proposed specification
\[
S'=b_{a^\star}\wedge
R_{\Gamma_{a^\star}^r\cup\Delta\Gamma}
\]
passes requirement preservation and consistency checks:
\[
\begin{gathered}
S'\models_{\mathcal D_{a^\star}}S_{a^\star}^r,\\
\forall x\in\mathcal D_{a^\star}:\quad\mathcal F'(x)\ne\emptyset,
\end{gathered}
\]
where $\mathcal F'$ is the feasible-extension set defined by $S'$.
Admission also requires a renderer satisfying the specification-preservation
contract. An inconclusive check does not count as admission.
\end{definition}

The entailment check concerns the meaning of the accumulated requirements.
For exact conjunction with unchanged clauses it follows directly, but it
must remain valid under any normalization of those clauses. Consistency is
per input: a single satisfying pair outside a rule's guard would not rule
out a contradiction inside its scope. These checks require a specified
requirement logic and an appropriate decision procedure or certificate;
no general procedure for arbitrary natural-language predicates is assumed.
A check restricted to the current input establishes only that local result.

\begin{proposition}[Monotonic specification]
Suppose admitted updates use
$\Gamma_{a^\star}^{r+1}=\Gamma_{a^\star}^r\cup\Delta\Gamma$ and preserve
all other requirement sets. Then, for every stage and every
$x\in\mathcal D_a$,
\[
\Gamma_a^r\subseteq\Gamma_a^{r+1},
\qquad
\mathcal F_a^{r+1}(x)\subseteq\mathcal F_a^r(x).
\]
At the updated stage, admission additionally ensures that the new feasible
set is nonempty on the declared domain.
\end{proposition}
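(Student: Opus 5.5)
The argument is a direct unfolding of the definitions, organised by cases on the stage. For a fixed round $r$ and an arbitrary stage $a\in\mathcal A$, we split into the unattributed case $a\neq a^\star$ and the attributed case $a=a^\star$. In both cases we first prove the set inclusion $\Gamma_a^r\subseteq\Gamma_a^{r+1}$. We then lift it pointwise to an entailment between the specifications $S_a^{r+1}$ and $S_a^r$. Finally we read off the inclusion of feasible sets from the definition of $\mathcal F_a^r(x)$ as the set of $u\in\mathcal U_a(x)$ satisfying $S_a^r(x,u)$.

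The inclusion of requirement sets follows from the update rule (Equation~\ref{eq:requirement-update}). For $a\neq a^\star$ we have $\Gamma_a^{r+1}=\Gamma_a^r$. For $a=a^\star$ we have $\Gamma_{a^\star}^{r+1}=\Gamma_{a^\star}^r\cup\Delta\Gamma\supseteq\Gamma_{a^\star}^r$.

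For the feasible sets, fix $x\in\mathcal D_a$ and $u\in\mathcal F_a^{r+1}(x)$. Then $u\in\mathcal U_a(x)$, $b_a(x,u)$ holds, and $\operatorname{req}_\delta(x,u)$ holds for every $\delta\in\Gamma_a^{r+1}$. Since $\Gamma_a^r\subseteq\Gamma_a^{r+1}$, the conjunction $R_{\Gamma_a^r}$ runs over a subfamily of these clauses, so it holds at $(x,u)$. The base requirement $b_a$ is unchanged because $B_a$ is fixed. Hence $S_a^r(x,u)$ holds and $u\in\mathcal F_a^r(x)$. When $\Gamma_a^r$ is empty the conjunction is true, so the step is vacuous and poses no problem.

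This argument needs the standing convention of the appendix that the typing $\mathcal U_a(x)$, the guards $g_\delta$, the predicates $\varphi_\delta$ and the schema version keep a fixed interpretation across rounds, so that each clause in $\Gamma_a^r$ denotes the same predicate at round $r+1$. We state this explicitly, since the inclusion $\Gamma_a^r\subseteq\Gamma_a^{r+1}$ is a syntactic fact and becomes semantic only under that convention. As a cross-check, the admission condition $S'\models_{\mathcal D_{a^\star}}S_{a^\star}^r$ gives the same conclusion at the updated stage directly, even if the clauses were normalised.

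The nonemptiness clause at the updated stage follows by taking $S'=S_{a^\star}^{r+1}$. The admission definition then requires $\mathcal F'(x)=\mathcal F_{a^\star}^{r+1}(x)\neq\emptyset$ for all $x\in\mathcal D_{a^\star}$, and an inconclusive check does not count as admission. We will remark that nonemptiness is not claimed at the other stages, since the proposition guarantees only inclusion there; it holds only if it held before the update.

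The main delicacy is not computational. It is keeping the statement semantic, about the specifications $S_a^r$, rather than about rendered prompts. The proof uses neither renderer fidelity nor model compliance, so it should close with a remark that these are separate obligations.
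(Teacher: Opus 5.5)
Your proposal is correct and matches the paper's proof step for step. Set union preserves the earlier clauses. A conjunction over the larger clause set, with $b_a$ unchanged, entails the conjunction over the smaller set, so the feasible sets can only shrink. The admission consistency check then gives nonemptiness at $a^\star$. The one difference is that the paper also concludes $p_a^r\preceq p_a^{r+1}$ under a faithful renderer, which is a consequence you rightly keep outside the stated proposition.
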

\begin{proof}
Set union preserves each earlier clause. A conjunction containing all
earlier clauses entails their conjunction, so every newly admissible
extension was previously admissible. The consistency check gives
nonemptiness. Unchanged stages have equal specifications. A faithful renderer therefore gives $p_a^r\preceq p_a^{r+1}$.
\end{proof}

If a generated extension complies with its current specification, it
therefore satisfies every earlier requirement. This conditional statement
does not establish compliance, nested model-output distributions, improved
answer accuracy, or semantic equivalence to the question. In particular,
two compliant prompts can induce different samples from their common
admissible set.

\begin{proposition}[Targeted update locality]
Assume $B_a$ and the renderer are fixed and only
$\Gamma_{a^\star}$ changes. Then $p_a^{r+1}=p_a^r$ for
$a\ne a^\star$. Retaining the prefix state immediately before
$a^\star$ and recomputing $a^\star$ and every later stage produces a
trace in which each replayed stage consumes its newly computed predecessor.
\end{proposition}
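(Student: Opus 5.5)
The proof splits into the two claims of the proposition. The first claim is $p_a^{r+1}=p_a^r$ for $a\neq a^\star$, and it comes straight from the definitions. By Equation~\ref{eq:requirement-update}, only $\Gamma_{a^\star}$ changes, so $\Gamma_a^{r+1}=\Gamma_a^r$ for every $a\ne a^\star$. Since $p_a^r=\operatorname{Render}(B_a,\Gamma_a^r)$ and both $B_a$ and the renderer are held fixed, the two renders have identical arguments and therefore identical outputs. If $p_a$ is instead read as the state-instantiated policy of Equation~\ref{eq:prompt-render}, the same argument runs pointwise. For each incoming state $x$, the active set $\Gamma_a^{(r)}[x]=\{\delta\in\Gamma_a^{(r)}:g_\delta(x)\}$ is determined by the unchanged set and the fixed guards. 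Hence $p_a^{r+1}(x)=p_a^r(x)$ for every $x$, so the two policies are equal as functions. I will state explicitly that this is equality of prompts, not of outputs. Equal prompts at downstream stages can still yield different extensions, because those stages receive different inputs.

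The second claim is about the structure of the replay trace, and I will prove it by induction along the stage order $\mathrm{ground}\prec\mathrm{bind}\prec\mathrm{construct}$. Let $a^\star$ be the attributed stage and $y$ the retained prefix state, that is, the incoming state of $a^\star$ in round $r$.
\begin{itemize}
\item \emph{Base case.} The replayed $a^\star$ is applied to $y$ with its new prompt $p_{a^\star}^{r+1}$ and produces a fresh state $y'$.
\item \emph{Inductive step.} If the replayed stage $a$ produced a state $x'_a$, then the next stage in the order is invoked on $x'_a$ under the prompt $p^{r+1}$. By the first claim this prompt equals $p^r$ when that next stage is not $a^\star$.
\end{itemize}
Because the pipeline in Equation~\ref{eq:pi-composition} is a composition, each stage's input is by definition the output of its immediate predecessor in the current execution. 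So the induction shows that every replayed stage consumes its newly computed predecessor, and that no stale round-$r$ state from a stage at or after $a^\star$ enters the trace. Stages before $a^\star$ are not recomputed, and their states are exactly the retained prefix.

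The main obstacle is bookkeeping rather than mathematics. Three points need care.
\begin{itemize}
\item \emph{Failure reports.} A stage may report failure instead of returning a state. The induction should then terminate with a failure status rather than feed an undefined state forward.
\item \emph{Stochastic generation.} Decoding may be random, so "the" newly computed predecessor must be read under the paper's convention of a fixed realised execution.
\item \emph{Lifted construction state.} Construction returns $Q$, which is lifted to $(x_b,Q)$ only for the trace. That lifting should be described as happening after the composition.
\end{itemize}
Once these conventions are fixed, both claims follow directly from Equation~\ref{eq:requirement-update} and the functional-composition form of $\pi$.
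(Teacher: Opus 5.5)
Your proposal is correct and follows essentially the same route as the paper. Unchanged renderer inputs $(B_a,\Gamma_a^r)$ give unchanged prompts for $a\ne a^\star$, and recomputing the suffix in stage order feeds each replayed stage its predecessor's new output while the retained prefix is unchanged by construction; your induction along $\mathrm{ground}\prec\mathrm{bind}\prec\mathrm{construct}$ and your pointwise treatment of guarded prompts simply spell out these one-line steps.
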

\begin{proof}
Unchanged renderer inputs give unchanged prompts. Recomputing the suffix
in stage order supplies each stage with the preceding stage's new output.
The retained prefix is unchanged by construction.
\end{proof}

Locality concerns the updated policy and retained prefix, not unchanged
downstream outputs. In particular, changing grounding can alter bindings
and construction even when their prompts remain fixed.

\begin{algorithm*}[t]
\normalsize
\caption{\textbf{Admit scoped requirements before replaying the affected stages.}}
\label{alg:refinement}
\begin{lstlisting}[style=paperalgorithm]
Input: $q,\KG,\Sigma,B,\Gamma$ and revision budget $T$
Require: initial specifications and renderer
         satisfy the admission contracts
$r\gets0$; $p_a\gets\operatorname{Render}(B_a,\Gamma_a)$
$X\gets\operatorname{Run}(\mathrm{ground},\mathrm{bind},\mathrm{construct};p)$
while true:
    $d\gets\operatorname{Diagnose}(q,X,\Gamma,\KG)$
    # execute and inspect the latest query
    if all checks pass: return $(X,\mathrm{Accepted},d)$
    if $r=T$: return $(X,\mathrm{Budget},d)$
    $(a^\star,\Delta\Gamma)\gets\operatorname{AttributeAndPropose}(d,X,\Gamma)$
    if no new requirement: return $(X,\mathrm{NoAddition},d)$
    if admission is not certified:
        return $(X,\mathrm{Conflict},d)$
    $\Gamma_{a^\star}\gets\Gamma_{a^\star}\cup\Delta\Gamma$
    $p_{a^\star}\gets\operatorname{Render}(B_{a^\star},\Gamma_{a^\star})$
    $X\gets\operatorname{ReplaySuffix}(X,a^\star,p)$
    $r\gets r+1$
\end{lstlisting}
\end{algorithm*}

\paragraph{Stopping and scope of the claims.}
Algorithm~\ref{alg:refinement} evaluates diagnostics afresh after every
replay, including the last permitted revision. Diagnostics combine current
stage-requirement checks, syntax and execution checks, and the declared
semantic checks; failure to construct a query is itself diagnostic evidence.
\textsc{Accepted} means that these checks pass.
\textsc{Budget} retains a current failure after $T$ admitted updates.
\textsc{Conflict} records failed or inconclusive admission, without applying
the proposed update. \textsc{NoAddition} records an empty or semantically
redundant proposal. None certifies natural-language correctness. The finite
budget ensures termination if individual calls terminate; it supplies no
convergence theorem. In particular, a consistent but mistaken requirement
can exclude the intended query. The measured benefit of prompt rewriting
therefore remains distinct from the conditional specification properties
proved above.

\FloatBarrier
\section{Conservative Inference-Time Learning}
\label{app:conservative-learning}

This appendix specifies the conditions of Theorem~\ref{thm:conservative-learning}. Its contracts are sufficient conditions for accepting an update. They are not consequences of endpoint success, requirement accumulation, or the reported experiments.

\subsection{Contextual Mappings and Protected History}

A learned mapping $\mu$ is a partial function from an expression and its context to an interpretation. Context includes the relevant argument roles and schema information. The bank $M_t$ stores versioned mappings, applicability guards, and provenance. Its validated entries have a declared semantic scope. Other entries remain hypotheses that may be revised. The procedure applying the bank is the composition of named stages,
\[
\pi_t=
\pi_{\mathrm{construct},t}\circ
\pi_{\mathrm{bind},t}\circ
\pi_{\mathrm{ground},t}^{E_\theta}.
\]
Each stage uses its named prompt $p_{a,t}$. Refinement may change these prompts, parsing, dispatch, and the composition of interpretations, while $E_\theta$ remains fixed. Revisions preserve the order of the three named interfaces. A different stage topology requires an additional certified alignment to those interfaces.

A protected record is $h=(\xi_h,I_h,Q_h,\tau_h)$, where $\xi_h$ contains the question and recorded context, $I_h$ is its validated interpretation, $Q_h$ is the resulting query, and $\tau_h$ records the stage states, mapping invocations, and originating process version $v(h)\leq t$. The finite set of records is $H_t$ and may span several earlier versions. Validation is assumed sound for the declared interpretation semantics. It is not identified with obtaining nonempty answers.

Write $\simeq_c$ for semantic equivalence in context $c$. The semantics must specify argument roles, operator scope, and the applicable query semantics, including multiplicities where relevant. Graph and schema versions are held fixed during comparisons, unless a separate transport certificate relates versions. If equivalence is defined only over one snapshot, the resulting claim has exactly that restricted scope. Interpreting an expression correctly still requires evidence connecting its language to the semantic object.

Protected mapping uses form finite sets $U_{\mu,H_t}$ extracted from the traces. A certificate over these uses does not establish correctness over every input satisfying the mapping's guard. Extending protection to a domain requires an additional domain-level certificate.

\subsection{Identification and Edit Localization}

Let $\mathcal V(D)$ be a candidate class of contextual mappings defined throughout a declared scope $D$. Assume the intended mapping $\mu^*$ is realizable in this class. For observations $\mathcal E_t$ and a declared consistency predicate $\operatorname{Pass}$, retain
\[
\mathcal V_t(D)=\{\mu\in\mathcal V(D):
\forall e\in\mathcal E_t\;\operatorname{Pass}(e,\mu)\}.
\]
Evidence is sound if every observation passes $\mu^*$, and separating if every inequivalent candidate fails at least one observation. These conditions give semantic identification,
\[
\mathcal V_t(D)/{\simeq_D}=\{[\mu^*]_D\}.
\]
Indeed, soundness and realizability retain the displayed class, while separation removes all others. This is a conditional identification statement, not an assertion that execution diagnostics satisfy its premises.

Context separability requires the representation and guard language to distinguish cases needing inequivalent interpretations. Overlapping guards require an explicit, coherent selection policy on protected contexts. A trace-based localization record identifies the proposed mapping, responsible stage, applicability scope, and supporting diagnostic. It justifies an edit site without requiring a uniquely identifiable repair. A failure can admit several repairs, and localization alone does not certify any of them.

These conditions concern the admission of a learned interpretation. The retention result below instead relies on preservation contracts. It does not derive semantic identification from those contracts.

\subsection{Local Contracts for a Process Revision}

For a fixed protected record, write its stored trace as
\[
s_0^h\longrightarrow s_g^h\longrightarrow
s_b^h\longrightarrow s_c^h.
\]
Here $g$, $b$, and $c$ abbreviate the named stage boundaries, and $s_c^h$ retains $Q_h$ and its incoming state. The updated procedure may use different state representations. For each stage boundary, specify a relation $R_a^h$ between old and new states. The relation $R_0^h$ identifies the same protected input context. The other relations describe the interfaces that must survive the revision.

These relations must be specified through interface meaning, independently of whether the final run happens to succeed. At grounding they preserve the roles and semantic alternatives passed onward. At binding they preserve the relevant entity, literal, attachment, join, and scope obligations under any declared variable correspondence. At each relevant boundary they also preserve designated protected mapping observations under a declared correspondence, including when parsing or decomposition changes. Representation adapters belong to these interfaces. At construction the relation must imply equivalence of the final interpretations. Relations may permit additional internal data, but cannot discard a distinction on which the protected interpretation depends.

\paragraph{Mapping preservation.}
Each protected mapping remains available, or has a replacement certified equivalent on $U_{\mu,H_t}$. Any change to its input representation requires a corresponding certified adapter. This preserves the mapping's declared meaning on protected uses. It does not assert that the updated dispatcher will select it. Dispatch, rule priority, and guard overlap are part of the stage contracts below. In particular, a rule's previous absence can matter to selection, so checking only positively recorded calls misses possible shadowing.

\paragraph{Initialization and terminal compatibility.}
Updated initialization terminates successfully on $\xi_h$, has a nonempty set of permitted outputs, and every output is related to $s_0^h$ by $R_0^h$. All preprocessing is included in initialization or grounding. At the terminal interface,
\[
R_c^h(s_c^h,z_c)
\;\Longrightarrow\;
\mathcal J(z_c)\simeq_{c_h} I_h,
\]
where the old query has validated interpretation $I_h$. This is an interface contract for query construction, not a test based on the observed answers of the complete revised pipeline.

\paragraph{Local simulation and totality.}
Let $a^-$ denote the boundary before stage $a$. For every updated state $z$ reachable at that boundary from $\xi_h$, require
\begin{equation}
\begin{split}
R_{a^-}^h(s_{a^-}^h,z)\ \Longrightarrow\qquad\qquad\\
\emptyset\ne\operatorname{Out}_{a,t+1}(z)
\subseteq\{z':R_a^h(s_a^h,z')\}.
\end{split}
\label{eq:local-stage-simulation}
\end{equation}
Every permitted invocation must also terminate successfully. The output relation covers all allowed behaviours of the stage, including selection and composition decisions. A failed, divergent, or empty-output stage therefore cannot satisfy the contract vacuously.

The direction is from new behaviours to protected old interfaces. Merely retaining one old-compatible path would permit additional incorrect paths. For an unchanged deterministic stage, the obligation is congruence on related inputs. An unchanged stochastic stage must satisfy it for all covered outputs. Modified stages require a local compatibility proof or an enforcing check over the declared domain. Keeping their prompt text unchanged does not discharge either obligation.

\subsection{Proof and Extent of Retention}

\begin{proof}[Proof of Theorem~\ref{thm:conservative-learning}]
Fix $h\in H_t$ and any covered execution of the updated procedure. Initialization establishes $R_0^h$. Apply Equation~\ref{eq:local-stage-simulation} to grounding. Totality supplies a successful successor, and every allowed successor is related to $s_g^h$ by $R_g^h$. This successor is reachable, so the binding contract applies. It yields a successfully terminating successor related to $s_b^h$ by $R_b^h$. The construction contract applies to that state and yields a successful terminal state $z_c$ related to $s_c^h$ by $R_c^h$.

Terminal compatibility gives $\mathcal J(z_c)\simeq_{c_h}I_h$. Sound validation of the historical trace certifies $I_h$, proving the displayed retention claim without identifying that trace with a run of the current process. The execution and record were arbitrary, so the result holds for every covered execution on $H_t$. Protected mapping meanings remain available by the separate mapping-preservation condition. The stage induction preserves the designated mapping observations and the validated composite interpretation. No inference from unchanged mapping names to unchanged process behaviour is made.
\end{proof}

For successive questions, an admitted update first preserves $H_t$. A new record enters $H_{t+1}$ after semantic validation and certification of the current procedure's local contracts on that record. A validated sample without this coverage remains a run-scoped observation. With this admission rule, induction over accepted updates protects every accumulated record, provided each update re-establishes the contracts for that history. Hypotheses outside the protected scope can continue to change.

Specification growth $\Gamma_a^{(r)}\subseteq\Gamma_a^{(r+1)}$ supplies neither the interface relations nor their simulation proofs. Likewise, testing one stochastic run establishes only that run's behaviour. Extending the theorem to every generated output requires contract coverage and compliance for every allowed draw, or a deterministic backend whose full transitions satisfy the contracts. Neither a finite history nor these conditional arguments imply universal-domain preservation, guaranteed discovery of correct mappings, or convergence of refinement. Persistent memory, identification certificates, and cross-query retention were not evaluated in the existing experiments.

\section{Benchmark and Baseline Selection}
\label{app:selection-criteria}

This appendix makes the evaluation scope and retention criteria explicit. It is a
retrospective account of the supplied experiment records and the selected resources,
not a preregistered sampling plan or a systematic screening of every available
benchmark and system. Selecting a benchmark family, determining which questions
contribute to a metric, and retaining an experimental result are separate decisions.

\subsection{Benchmark Families and Evaluation Targets}
\label{app:benchmark-selection}
\label{app:benchmark-protocol}

\paragraph{Inclusion criteria.}
The public benchmark families meet three task requirements: English natural-language
questions paired with SPARQL queries or reference answers; an identified RDF graph
against which schema grounding and query execution can be assessed; and a released
question split that permits the evaluation target to be specified. The selected
families provide complementary settings: DBpedia and Wikidata grounding in QALD,
complex question--query pairs at larger scale in LC-QuAD 2.0, and heterogeneous
per-domain schemas in Spider4SPARQL
\citep{perevalov2022qald9plus,usbeck2024qald10,dubey2019lcquad2,kosten2023spider4sparql}.
The four public families yield five main benchmark/KG targets because QALD-9-Plus
is evaluated on two graphs. The constructed diagnostic set is a fifth family,
analysed separately by composition-related stratum rather than added to the pooled
main-target recall.

\paragraph{Scope exclusions.}
Non-English evaluation, text-only question answering, and query tasks without a
SPARQL/RDF target lie outside this evaluation's scope. Other eligible NL2SPARQL
benchmarks are not claimed to have been screened and rejected. The selected set
supports comparisons across its recorded targets; it does not constitute a random
sample of schemas, languages, or deployment settings.

\begin{table*}[t]
\centering
\footnotesize
\tablecompact
\begin{tabular}{@{}>{\raggedright\arraybackslash}p{0.17\textwidth}>{\raggedright\arraybackslash}p{0.30\textwidth}>{\raggedright\arraybackslash}p{0.47\textwidth}@{}}
\toprule
\rowcolor{TableHead}
Evaluation target & Complementary role & Recorded target, graph, and metric coverage \\
\midrule
QALD-9-Plus, DBpedia & English grounding in an established open KG & 93-question recorded target; 92 contribute to F1. Local DBpedia 2016-10 snapshot. \\
\addlinespace[3pt]
QALD-9-Plus, Wikidata & A second open KG and its different schema & 136-question target; 125--135 contribute, depending on metric. Live Wikidata via QLever. \\
\addlinespace[3pt]
QALD-10 & Further Wikidata questions and reference-state sensitivity & 394-question test target. Live QLever execution; static and live-re-executed-gold reference scores remain distinct. \\
\addlinespace[3pt]
LC-QuAD 2.0 & Complex question--query pairs at larger scale & Recorded 6{,}046-question test target, raw questions. Dataset refers to 2018 graphs; execution uses live Wikidata via QLever. \\
\addlinespace[3pt]
Spider4SPARQL & Heterogeneous local graph schemas & Resource spans 166 custom KGs across 138 domains. Recorded 932-question development target; 905--926 contribute. Local dumps for 18 of 20 bundled graphs; two excluded for recorded data-quality reasons. \\
\addlinespace[3pt]
\rowcolor{TableAccent}
Composition diagnostic & Controlled composition-related strata & 181 constructed questions on DBpedia 2016-10: 49 intersective, 29 subsective, 49 relational, 27 modal, 27 privative. \\
\bottomrule
\end{tabular}
\caption{\textbf{Evaluation targets differ in schema scope and scored coverage.} Counts describe
the supplied experiment record, not a reconstructed item-level exclusion log. The
five main benchmark/KG targets are followed by the separate diagnostic set.
Upstream-release discrepancies for QALD-9-Plus DBpedia and Spider4SPARQL are discussed
in Appendix~\ref{app:benchmark-selection}.}
\label{tab:selection-criteria}
\end{table*}

\paragraph{Recorded subsets and upstream releases.}
The counts in Table~\ref{tab:selection-criteria} retain the experiment record.
They should not all be read as complete current upstream splits. At this revision,
the QALD-9-Plus repository lists 150 DBpedia test questions, whereas the supplied run
record targets 93; the Spider4SPARQL repository lists 1{,}034 development queries,
whereas the supplied run record targets 932
\citep{qald9plus_repository,spider4sparql_repository}. These repository metadata
were checked on 5 October 2026.
The release identifiers and question-ID mappings needed to reconcile these
subset/version differences were not supplied. We preserve the observed evaluation
counts and do not infer that the differences correspond to particular filtering
rules or unsuccessful queries.

\paragraph{Question-level inclusion and evaluability.}
Within a selected target, the intended comparison is over the same questions and
reference convention for each configuration. Infrastructure or reference-data
problems concern evaluability; a model's incorrect query, difficult grounding case,
or empty predicted answer does not by itself justify removing a question. The
available record reports incomplete or metric-dependent coverage on QALD-9-Plus
and Spider4SPARQL, including two locally excluded Spider4SPARQL graphs attributed
to data-quality problems. It does not provide the excluded item IDs, graph-specific
fault descriptions, or each metric's handling of failed queries and unavailable
answers. Consequently, the record cannot establish identical effective denominators
across every configuration, certify difficulty-independent missingness, or support
a claim that every failure was assigned zero. The reported counts therefore qualify
the answer comparisons rather than define a new benchmark split.

Two restrictions are explicitly different from this missing coverage. The conditioned diagnostic uses the
intersection of entity-and-structure-correct questions under the two compared
configurations, so its conditioned F1 targets that subset alone
(Appendix~\ref{app:diagnostic-derivation}). The auxiliary Spider4SPARQL retry comparison
uses a recorded matched $n=912$ sample and is kept separate from the main $n=932$
target (Table~\ref{tab:config-ablation}). Neither subset is substituted for a
full-target result.

\paragraph{Diagnostic-set admission.}
The 181-question diagnostic set is deliberately constructed around populated graph
patterns and composition-related distinctions. Candidate gold queries are executed
against DBpedia 2016-10, and the construction record retains candidates with checked,
in-range answers. Relational cases are downsampled from 67 to 49; modal and privative
categories are screened for semantic fit. These construction filters precede the
reported system comparison and differ from excluding a question because a compared
system answers it poorly. They also make the diagnostic set a targeted probe rather
than an estimate of naturally occurring question frequencies. The original
subsective-labelled cases remain included under that label, with their
construct-validity limitation made explicit. Appendix~\ref{app:diagnostic-construction}
describes these choices; the record does not supply a complete rejected-candidate
ledger or numerical answer-range thresholds.

\paragraph{Graph version and answer reference.}
The DBpedia evaluations use the stated 2016-10 snapshot, and Spider4SPARQL uses local
per-KG dumps. The Wikidata experiments use the live QLever service identified in the
run record, whose contents need not match the dataset's original graph version.
For QALD-10, the static reference compares predictions with released reference
answers, while the live convention re-executes the gold query on the evaluation
endpoint. These are separate reference conditions, not inclusion filters chosen
according to which score is higher. The main five-target comparison uses the stated
static QALD-10 condition; both baseline reference scores remain visible in
Table~\ref{tab:external-comparators}. The upstream QALD-10 resource itself recommends
a stable graph and endpoint for reproducibility \citep{qald10_repository}.
The raw-question setting is used for LC-QuAD 2.0. An annotation-conditioned
variant remains future work. Exact live-service evaluation timestamps, dataset commits, and local dump checksums
are required to reproduce the record's effective reference state; this appendix
does not infer them from present-day endpoints.

\subsection{Baseline and Result Retention}
\label{app:baseline-selection}

\paragraph{Controlled within-system comparisons.}
The principal baselines change a named component of the same pipeline. Flat,
Encoder, and Full retrieval distinguish the stock encoder, the fixed
composition-aware encoder, and the addition of type and hierarchy ranking. The
reported protocol holds the backbone and other prompting settings fixed within
these comparisons. A1 and A2 test RH1 and RH2. A4 compares prompt rewriting with refinement disabled to test RH3. A6 tests RH4 through paired Flat--Full comparisons within each backbone. A3 supplies gold entities as a diagnostic control, and A7 explores restricted symbolic construction. Results are retained with their own recorded target, metric, and
configuration. The auxiliary retry/probe factorial comparison is separate from
Full retrieval, and model-selection criteria for A6 are reported separately.

\paragraph{Literature comparators.}
Table~\ref{tab:external-comparators} retains SchemaForge, mKGQAgent, SPINACH, SAGA,
and QAnswer as contextual comparators. A numerical cell is included where the
available comparison record identifies the system, a shared benchmark target, a
metric name, and a reported value. These systems were not rerun in the present
harness. Their source metrics and unavailable cells are preserved rather than
converting F1, MF1, EM, and EA into an assumed common measure. QAnswer's 30.39 entry
is the English macro-F1 value attributed to \citet{perevalov2022qald9plus} by the
KGQA leaderboard \citep{kgqa_leaderboard_dbpedia}.
Differences in snapshots, prompts, backbones, and evaluated items preclude treating
this table as a controlled ranking. Repeated external QALD-10 values in its two
reference columns are inherited from the comparison record, not separate reruns.

Method families discussed in the baseline overview or related work without a
corresponding quantitative record remain conceptual comparisons. This includes the
use of demonstration selection and graph-embedding generation as methodological
references; those citations do not establish additional reproduced baseline runs.
Absence from the numerical table means that a suitable result was not included in
the available comparison record, not that the method was tested and failed.

\paragraph{Exclusion of unreconciled result records.}
A5 is excluded because its budget assignments conflict with the stated A4
no-refinement configurations. Every zero-budget value matches the flat-retrieval
score, while every largest-budget value matches the full-retrieval score without
refinement. These assignments do not establish an effect of increasing the
refinement budget. The inconsistent pooled diagnostic
\emph{retrieval-configuration} row is not reconstructed from incompatible records.
The separately reported pooled solver comparison is retained under its own
configuration, including the lower solver score. Earlier case counts from another
diagnostic version and unsupported recall-curve coordinates are also excluded from
current empirical claims. These are provenance restrictions on evidence, distinct
from benchmark-item exclusions. No retained result is removed merely because its
effect is negative.

\subsection{Diagnostic Benchmark Construction}
\label{app:diagnostic-construction}

The composition-type diagnostic benchmark is constructed directly against the target KG, not inherited from \citet{mohr2026concepts}'s own held-out evaluation resource: that resource's noun-phrase paraphrase pairs (e.g., ``clean temple'' $\to$ ``spotless temple''), distinct from their untyped 3.3M-pair training corpus, are built to test composition understanding independent of any knowledge graph, so most of its candidate heads and modifiers do not correspond to populated classes, properties, or KG-relevant constraints when checked against the target KG directly.

\paragraph{Semantic distinctions.} Intersective modification conjoins the head denotation with an independently interpretable property. Subsective modification retains head membership while restricting it in a head-dependent way, as in ``skillful teacher'' \citep{carvalho2025montague}. Relational expressions introduce another argument, as in ``presidents of France.'' Modal expressions qualify membership, while privative expressions describe an apparent instance that does not genuinely instantiate the head concept. These distinctions are not a claim that all five types have a uniform triple-pattern implementation.

\paragraph{Construct validity.} The subsective stratum's questions are operationalized via role-discontinuation KG properties rather than a property that directly instantiates head-dependent subsective modification: ``former president'' does not in general entail current membership in the class of presidents \citep{carvalho2025montague}, so successor and activity-end properties approximate subsective semantics rather than instantiating them exactly. The stratum retains its original questions and reported numerical results; we flag this construct-validity gap rather than leave it implicit.

\paragraph{Construction methodology.} Real (class, property, value) patterns are identified per composition type directly against the target KG, using the taxonomy's category definitions (Section~\ref{sec:composition}) rather than an inherited label, and every candidate question and gold query is executed live and kept only if it returns a verified, in-range result:
\begin{itemize}[leftmargin=1.4em]
  \item \textbf{Intersective} (49 verified): head classes with an independent property constraint, e.g.\ \texttt{?x a dbo:Writer ; dbo:nationality dbr:India} (395 real results).
  \item \textbf{Relational} (49 verified, downsampled from 67 to match intersective): functional ``X of Y'' properties on countries, e.g.\ \texttt{dbo:capital}, \texttt{dbo:officialLanguage}, \texttt{dbo:currency}, with target-value countries chosen by first checking which countries actually have that property set.
  \item \textbf{Subsective} (29 verified): no generic ``former/past-role'' property exists at useful scale in this KG snapshot; two verified proxies were used instead: \texttt{dbo:successor} existence on \texttt{dbo:Automobile} (discontinued/superseded models) and \texttt{dbo:activeYearsEndDate} on \texttt{dbo:Politician} (no-longer-active officeholders), plus a few unscoped single-pattern examples. These patterns operationalise role history or succession. They do not directly instantiate head-dependent subsective modification, limiting the semantic interpretation of this stratum.
\end{itemize}
These three strata use direct graph patterns. The supporting diagnostic analysis compares all five operational strata. Modal and privative questions use DBpedia category (\texttt{dbc:}) membership from the DBpedia 2016-10 category dump, with the category encoding the relevant qualification or non-genuine status. This is a representation choice for the benchmark rather than a general restriction on what a KG can express.
\begin{itemize}[leftmargin=1.4em]
  \item \textbf{Modal} (27 verified): categories whose membership encodes an epistemic qualification, e.g.\ \texttt{dbc:Suspected\_criminals} and \texttt{dbc:Impostors}.
  \item \textbf{Privative} (27 verified): categories whose membership asserts non-genuineness, e.g.\ \texttt{dbc:Phantom\_islands}, \texttt{dbc:Fictional\_countries}. Each candidate category was individually screened for theoretical fit; categories like ``pseudonymous writer'' or ``forger'' were excluded because they still denote a genuine instance of their head concept, with only the surface presentation differing.
\end{itemize}

Modal and privative candidates are grounded against the full DBpedia 2016-10 category dump (23{,}990{,}977 \texttt{dct:subject} triples), giving real, clean candidates across both strata. Every question across all five composition types is written to genuinely paraphrase its underlying concept rather than restate its target schema element's own label (e.g.\ ``Which ships are modern reconstructions built to resemble a historic vessel rather than being that vessel itself?'' for \texttt{dbc:Replica\_ships}, or ``Which weapons are fictional rather than real?'' avoided in favor of a genuine paraphrase for \texttt{dbc:Fictional\_weapons}), so the benchmark tests compositional grounding rather than lexical string matching; this discipline applies to relational questions as well as modal/privative ones, since a relational question that names its target property directly tests the same thing trivially.

The benchmark totals 181 verified questions across all five composition types (49 intersective, 29 subsective, 49 relational, 27 modal, 27 privative). The orchestrated system evaluated against it is the same one used for Table~\ref{tab:baseline-results}, run live end to end. The ablation-baseline (flat-encoder) and full-method (narrowing-on) configurations have both been run against all 181 questions, complete and real (Table~\ref{tab:diagnostic-baseline-by-type}, Section~\ref{sec:baseline-results}); the recall@10 columns describe variation in retrieval gains across these strata, with the largest gain in the subsective stratum.

\FloatBarrier
\section{Base Language Models and Selection Scope}
\label{app:base-models}

\begin{table*}[!t]
\centering
\footnotesize
\tablestandard
\begin{tabular}{@{}>{\raggedright\arraybackslash}p{0.20\textwidth}>{\raggedright\arraybackslash}p{0.20\textwidth}>{\raggedright\arraybackslash}p{0.26\textwidth}>{\raggedright\arraybackslash}p{0.27\textwidth}@{}}
\toprule
\rowcolor{TableHead}
Backbone & Architecture and parameters & Access and documented context & Relevant generation controls \\
\midrule
Qwen2.5-7B-Instruct & Dense transformer. 7.61B total & Open weights. 32,768-token default configuration, up to 131,072 with YaRN & Instruction-tuned. Documented generation length up to 8,192 tokens \\
\addlinespace[2pt]
Qwen3-32B & Dense transformer. 32.8B total & Open weights. 32,768 tokens natively, up to 131,072 with YaRN & Thinking and non-thinking modes, selected through the chat template \\
\addlinespace[2pt]
gpt-oss-120b & Mixture of experts. 116.8B total, \textbf{5.1B active} per token & Open weights. 131,072-token configured context & Low, medium, or high reasoning effort. Harmony response format \\
\addlinespace[2pt]
gpt-5.5 & Architecture and parameter counts not disclosed in the cited documentation & Hosted API. 1,050,000-token context & Reasoning effort from none to xhigh. Maximum output of 128,000 tokens \\
\bottomrule
\end{tabular}
\caption{\textbf{Backbones differ in architecture, context capacity, and inference controls.} The table reports documented capabilities rather than recovered run settings. Sources are the Qwen model cards, the gpt-oss model card and released configuration, and the GPT-5.5 documentation~\citep{qwen25modelcard,qwen3modelcard,openai2025gptoss,openai2025gptossconfig,openai2026gpt55docs}. Context limits count each model's own tokens and are not matched input budgets. Bold distinguishes the sparse model's active parameter count from its nominal size.}
\label{tab:base-model-characteristics}
\end{table*}

\subsection{Inclusion and Exclusion Criteria}
\label{app:model-selection}

The backbone sweep in Table~\ref{tab:prospective-backbone} tests RH4 by asking whether the retrieval improvement survives substitution of the generative model. A model contributes to this comparison when the reported runs use it across grounding, binding, and construction and provide paired flat and full retrieval scores on the same target. The four available pairs cover Qwen2.5-7B-Instruct, Qwen3-32B, gpt-oss-120b, and gpt-5.5. They span two providers, dense and mixture-of-experts architectures among the disclosed models, and both open-weight and hosted access. This breadth is useful for testing sensitivity to the backbone. It is not a representative sample of language models or an isolated test of scale.

These criteria describe the scope of the available experimental record. They are not a preregistered selection protocol. Models without a paired A6 result are outside the quantitative comparison. Provider benchmark scores, alternative checkpoints, quantised derivatives, and differently fine-tuned variants are not substituted for missing task results. The sweep covers only QALD-10 with static reference answers and QALD-9-Plus DBpedia. It does not support extrapolation to the other benchmarks, the conditioned diagnostic subset, or the preservation theorem.

\subsection{Documented Characteristics}

Table~\ref{tab:base-model-characteristics} records provider-documented properties, checked on 5 October 2026. Here ``base language model'' means the generative backbone of the system. The listed Qwen models are post-trained models, rather than the corresponding pretraining-only checkpoints. Qwen2.5-7B-Instruct is instruction-tuned. Qwen3-32B supports both thinking and non-thinking operation within one model~\citep{qwen25modelcard,qwen3modelcard,qwen3launch}. The composition-aware retrieval encoder is a separate component and is held fixed in the A6 full configurations.

The gpt-oss label rounds its total parameter count to 120B. Its approximately 5.1B active parameters per token prevent interpreting that label as 120B of dense computation. Its release also uses MXFP4 quantisation for the expert weights~\citep{openai2025gptoss}. Parameter counts therefore do not order the computational or reasoning budgets of these four systems. Training data, post-training, architecture, tokenisation, and decoding also differ.

The GPT-5.5 system card confirms the model's release in April 2026~\citep{openai2026gpt55card}. The API documentation lists the dated snapshot \texttt{gpt-5.5-2026-04-23}, but the experiment table records only \texttt{gpt-5.5}~\citep{openai2026gpt55docs}. The existence of that snapshot does not establish that it served the reported requests.

\subsection{Run Configuration and Interpretation}

The reported protocol fixes prompts and retrieval settings within each flat--full comparison and substitutes the backbone across the three named stages. The supplied summaries do not record the checkpoint revision or API snapshot, request dates, inference engine, quantisation actually loaded, context and output budgets, temperature, sampling seed, or reasoning-mode settings. These fields remain \emph{not recorded}. Documented defaults are not retrospectively assigned to the runs. In particular, Qwen3's thinking flag and the reasoning effort of both OpenAI models can change inference cost and behaviour independently of the retrieval intervention.

The supported conclusion is that the full retrieval configuration improves the reported answer F1 within each of the four tested backbones on the two targets. The table does not isolate a scaling law, establish equal inference budgets, or measure cost efficiency. A reproducible extension should retain a model manifest for every configuration and log the prompt, retrieval, decoding, and snapshot settings together.

\FloatBarrier
\section{Evaluation Protocol and Diagnostics}
\label{app:transition}

This appendix specifies the configurations and uncertainty conventions behind
Tables~\ref{tab:baseline-results}--\ref{tab:prospective-backbone}. Benchmark
selection and scored coverage appear in Appendix~\ref{app:selection-criteria},
and model specifications in Appendix~\ref{app:base-models}. Supplementary
comparisons are collected in Appendix~\ref{app:additional-results}.

\subsection{Matched Configurations}

\begin{enumerate}[leftmargin=1.4em]
  \item \textbf{Encoder comparison (RH1, A1).} Encoder replaces the baseline sentence-embedding backend with the fixed $E_\theta$. The concept checkpoint is wrapped in a mean-pooled adapter matching its training-time pooling. The reported comparison holds the language-model backbone, stage prompts, and benchmark targets fixed and disables reflective refinement.
  \item \textbf{Structural ranking (RH2, A2).} Full adds type compatibility and hierarchy proximity to the encoder similarity score. The controlled comparison requires the same initial nearest-neighbour pools and retained candidate budget, changing only the ranking used to select candidates.
  \item \textbf{Reflective refinement (RH3, A4).} Stage-attributed prompt rewriting is enabled with the Full retrieval configuration and backbone fixed. The comparator uses the same pipeline and initial prompts with refinement disabled. A grounding-prompt revision can change extracted spans and retrieved candidates, so a fixed retrieval configuration does not imply fixed retrieval outputs.
  \item \textbf{Backbone comparison (RH4, A6).} Each of four backbones is used across grounding, binding, and construction. Flat and Full are compared within that backbone on QALD-10 with static reference answers and QALD-9-Plus DBpedia, with reflection disabled. RH4 tests positive paired answer-F1 differences. It predicts neither equal gains across models nor a relationship with parameter count.
\end{enumerate}

\paragraph{Retrieval controls and configuration provenance.}
Isolating the encoder contribution requires the same input spans, indexed schema text, search vocabulary, candidate budgets, and context-selection policy across A1 configurations. Isolating reranking requires the same initial candidate pools in A2. With recall measured at $k=10$, reranking can change retained membership only if the initial pool contains more than ten candidates. The experiment record does not identify the baseline encoder checkpoint, the precise concept-encoder checkpoint, schema-text templates, initial pool size $K$, similarity function, search-index settings, or context budget and pruning policy. These values remain unrecorded. Reproduction also requires identifying whether reported recall is measured before or after context packing. The aggregate improvements do not resolve these configuration gaps.

\paragraph{Supporting diagnostics.}
Composition-stratum comparisons describe variation within the constructed benchmark. The shared entity-and-structure-correct subset supplies conditioned answer F1, defined below. Direct joint-grounding accuracy is not reconstructed from those differences. A3 varies entity information, including its interaction with A4, and A7 explores restricted construction. The structured admission and cross-query retention mechanisms require a separate evaluation with requirement and mapping histories.

\subsection{Uncertainty and Repeated Runs}
\begin{enumerate}[leftmargin=1.4em]
  \item \textbf{Bootstrap confidence intervals.} A1--A4 are reported with paired percentile-bootstrap intervals from 10{,}000 question-level resamples. A1/A2 point estimates average the five benchmark means equally; their reproduction requires stratified resampling with the same benchmark weighting, rather than pooling all questions equally. The stored aggregate tables do not expose those resamples. A3/A4 use LC-QuAD 2.0 questions. These intervals concern question-sampling uncertainty within a single run and do not estimate run-to-run variance. No interval for a between-stratum interaction is reported here.
  \item \textbf{Repeated-run variability.} \label{item:noise-floor} Three repeats are reported of the baseline on QALD-9-Plus DBpedia and Wikidata. DBpedia ranges were 68.00--69.15 MF1 and 56.99--58.06 EA. Wikidata ranges were 66.24--68.46 MF1 and 57.35--59.26 EA. These ranges describe those runs only; they are not confidence intervals or a transferable significance threshold for other settings.
\end{enumerate}

\subsection{Conditioned Grounding and Answer Diagnostics}
\label{app:diagnostic-derivation}

For configuration $c$, let $E_c$ and $S_c$ be the questions whose entity linking and query structure match their references. The shared subset is
\[
I=E_{\mathrm{flat}}\cap S_{\mathrm{flat}}
\cap E_{\mathrm{full}}\cap S_{\mathrm{full}}.
\]
Both configurations must be evaluated on exactly this subset. Its size and fraction of the full split are reported to make the conditioning visible. Selection into $I$ depends on both configurations, so the estimate is a within-subset comparison and need not generalise to excluded questions.

The two quantities in Equation~\ref{eq:diagnostics} require different observations. Joint-grounding accuracy requires per-question schema-assignment judgements; conditioned answer F1 requires executed answers. Table~\ref{tab:conditioned-answers} contains the latter and its arithmetic difference. A correct answer set can result from more than one schema realization, and partial answer agreement can coexist with a grounding error. The table therefore cannot be used to reconstruct $\Delta_{\mathrm{ground}}$. Direct grounding accuracy and its relationship to answer quality require the corresponding assignment-level records.

The backbone table reports unconditioned F1. It does not establish that the conditioned propagation effect or $\Delta_{\mathrm{ground}}$ is invariant across backbones.

The excluded A5 budget record is documented in Appendix~\ref{app:baseline-selection}.

\section{Additional Result Analysis}
\label{app:additional-results}

\subsection{Supporting Diagnostics}
\label{app:supporting-diagnostics}
\label{sec:supporting-diagnostics}

The following analyses examine where the gains occur and the limits of the construction mechanism.

\begin{table*}[!t]
\centering
\footnotesize
\tablecompact
\begin{tabular}{@{}lrrrrrrrrrrr@{}}
\toprule
 & & \multicolumn{4}{c}{Schema recall@10} & \multicolumn{3}{c}{MF1 (\%)} & \multicolumn{3}{c}{EA (\%)} \\
\cmidrule(lr){3-6}\cmidrule(lr){7-9}\cmidrule(lr){10-12}
\rowcolor{TableHead}
Stratum & $n$ & Flat & Enc. & Full & $\Delta$ (pp) & Flat & Enc. & Full & Flat & Enc. & Full \\
\midrule
Intersective & 49 & 0.878 & 0.918 & 0.939 & +6.1 & 89.85 & 91.24 & 92.15 & 87.76 & 89.80 & 91.84 \\
\rowcolor{TableAccent}
Subsective & 29 & 0.310 & 0.552 & 0.793 & \textbf{+48.3} & 62.78 & 71.03 & \textbf{83.47} & 48.28 & 58.62 & 68.97 \\
Relational & 49 & 0.531 & 0.612 & 0.694 & +16.3 & 59.18 & 64.29 & 68.37 & 59.18 & 63.27 & 67.35 \\
Modal & 27 & 0.741 & 0.778 & 0.815 & +7.4 & 77.49 & 79.63 & 81.48 & 74.07 & 77.78 & 81.48 \\
Privative & 27 & 0.407 & 0.481 & 0.556 & +14.9 & 51.01 & 55.56 & 59.26 & 44.44 & 48.15 & 55.56 \\
\midrule
\rowcolor{TableAccent}
Overall & 181 & 0.602 & 0.690 & 0.774 & +17.1 & 69.57 & 73.65 & 77.82 & 65.19 & 69.62 & 74.59 \\
\bottomrule
\end{tabular}
\caption{\textbf{The subsective stratum shows the largest retrieval gain.} Enc.\ uses
$\Enc_\theta$ and Full adds type and hierarchy ranking. Recall@10 remains a fraction;
its Full-minus-Flat change is converted to percentage points. The shaded subsective row has the largest recall gain, $+48.3$ pp; its final
MF1 and recall gain are highlighted. The Overall row pools the five strata by their
question counts ($n$).
This 29-question stratum is operationalized via role-discontinuation KG properties;
Appendix~\ref{app:diagnostic-construction} discusses the resulting construct-validity limitation.}
\label{tab:diagnostic-baseline-by-type}
\end{table*}

\begin{table*}[!t]
\centering
\footnotesize
\tablecompact
\begin{tabular}{@{}lrrrrrrrr@{}}
\toprule
 & & \multicolumn{3}{c}{F1 on $I$ (\%)} & \multicolumn{3}{c}{Grounding Acc.\ on $I$ (\%)} \\
\cmidrule(lr){3-5}\cmidrule(lr){6-8}
\rowcolor{TableHead}
Benchmark & $n$ & Flat & Full & $\Delta$ (pp) & Flat & Full & $\Delta_{\mathrm{ground}}$ (pp) \\
\midrule
Spider4SPARQL & 612 & 88.41 & 93.79 & +5.38 & 87.2 & 95.4 & +8.2 \\
QALD-9-Plus (DBp.) & 61 & 79.18 & 86.07 & \textbf{+6.89} & 74.6 & 87.9 & \textbf{+13.3} \\
QALD-9-Plus (Wiki.) & 78 & 76.54 & 82.31 & +5.77 & 71.2 & 82.8 & +11.6 \\
QALD-10 (static) & 214 & 71.26 & 76.64 & +5.38 & 68.4 & 79.1 & +10.7 \\
LC-QuAD 2.0 & 1847 & 73.05 & 78.92 & +5.87 & 70.9 & 82.0 & +11.1 \\
\bottomrule
\end{tabular}
\caption{\textbf{Grounding-accuracy gains exceed the corresponding answer-F1 gains on every target.} Supporting diagnostic: answer F1 and joint schema-grounding accuracy ($\operatorname{Acc}_{\mathrm{ground}}$) on the shared entity-and-structure-correct subset $I$, using \texttt{gpt-5.5}. Full means full retrieval. $\Delta$ and $\Delta_{\mathrm{ground}}$ are Full-minus-Flat changes in percentage points; $\Delta_{\mathrm{ground}}$ is larger than $\Delta$ on all five targets, so part of the grounding improvement does not propagate to the final answer. Bold marks the largest gain in each of the two change columns, both on QALD-9-Plus (DBp.). Subsets comprise 65.7\%, 65.6\%, 57.4\%, 54.3\%, and 30.5\% of the respective recorded targets.}
\label{tab:conditioned-answers}
\end{table*}

\begin{table}[!t]
\centering
\footnotesize
\tablestandard
\begin{tabular}{@{}llrr@{}}
\toprule
\rowcolor{TableHead}
Entity bindings & Refinement & F1 (\%) & $\Delta$ (pp) \\
\midrule
Predicted (baseline) & Off & 64.91 & -- \\
Gold (A3) & Off & 71.38 & +6.47 \\
Predicted (A4) & On & 68.24 & +3.33 \\
\rowcolor{TableAccent}
Gold (A3+A4) & On & \textbf{74.12} & \textbf{+9.21} \\
\bottomrule
\end{tabular}
\caption{\textbf{Reflection improves answers even with gold entities.} A3/A4 on LC-QuAD 2.0. $\Delta$ is relative to predicted entities without
refinement. The combined configuration is highlighted. Refinement adds 2.74 points
beyond gold entities; gold entities add 5.88 points beyond refinement. The combined
gain is 0.59 points below the sum of standalone gains. This is descriptive, with no
interaction uncertainty estimate available.}
\label{tab:a3-a4-interaction}
\end{table}

\begin{table*}[!t]
\centering
\small
\tablestandard
\begin{tabular}{@{}lrrrrrrr@{}}
\toprule
 & & \multicolumn{3}{c}{MF1 (\%)} & \multicolumn{3}{c}{EA (\%)} \\
\cmidrule(lr){3-5}\cmidrule(lr){6-8}
\rowcolor{TableHead}
Diagnostic stratum & Solver fired & Default & Solver & $\Delta$ (pp) & Default & Solver & $\Delta$ (pp) \\
\midrule
Intersective & 41/49 & \textbf{89.85} & 67.42 & $\mathbf{-22.43}$ & \textbf{87.76} & 67.35 & $-20.41$ \\
Subsective & $\sim$15/29 & 62.78 & \textbf{64.84} & +2.06 & 48.28 & \textbf{55.17} & +6.89 \\
Relational & 28/49 & 59.18 & \textbf{60.54} & +1.36 & 59.18 & 59.18 & 0.00 \\
Modal & 3/27 & \textbf{77.49} & 59.14 & $-18.35$ & \textbf{74.07} & 55.56 & $-18.51$ \\
Privative & 2/27 & \textbf{51.01} & 41.62 & $-9.39$ & \textbf{44.44} & 29.63 & $-14.81$ \\
\midrule
\rowcolor{TableAccent}
Reported pooled result & -- & \textbf{69.58} & 60.06 & $\mathbf{-9.52}$ & -- & -- & -- \\
\bottomrule
\end{tabular}
\caption{\textbf{The restricted solver lowers pooled F1 despite gains in two strata.} A7: restricted solver mode versus default construction. $\Delta$ is Solver
minus Default in percentage points. Bold marks the higher value within each paired
comparison; ties are unmarked. The largest stratum loss and the pooled change are also highlighted. Solver wins in two MF1 strata coexist with a lower
reported pooled MF1. Fire counts retain their source precision; unsupported or
unsuccessful cases use default generation. These are descriptive single-run
comparisons, without significance claims.}
\label{tab:solver-results}
\end{table*}

\Needspace{4\baselineskip}
\noindent\textbf{The largest diagnostic gain combines encoding and hierarchy improvements.}
Recall@10 in the 29-question subsective stratum rises from 0.310 to 0.552 with the encoder and to 0.793 with the hierarchy (Table~\ref{tab:diagnostic-baseline-by-type}). These steps contribute 24.2 and 24.1 percentage points. Their combined 48.3-point increase is almost three times the next-largest gain, 16.3 points for relational expressions. The exceptional result therefore depends on both representation and structural ranking.

The pattern also appears in answer quality. Subsective MF1 rises from 62.78 to 71.03 and then to 83.47, with the hierarchy step contributing 12.44 of the total 20.69-point gain. Other strata improve by smaller amounts. These differences support analysing retrieval by composition-related difficulty. Their strongest instance concerns the subsective stratum, whose role-discontinuation-based construction is an imperfect proxy for formally subsective modification (Appendix~\ref{app:diagnostic-construction}).

\Needspace{4\baselineskip}
\noindent\textbf{Answer gains persist with correct entities and query structure.}
On the shared subset where both configurations pass the entity and structure checks, the full configuration improves answer F1 by 5.38--6.89 points on every target (Table~\ref{tab:conditioned-answers}). The retrieval advantage remains visible after restricting the comparison to these cases. The result is consistent with improved residual grounding and construction, although the table measures answers rather than joint class-and-property accuracy.

The shared subsets cover 30.5--65.7\% of their recorded evaluation targets. Since membership depends on both configurations, these estimates describe the retained questions and do not identify a causal effect or extend automatically to the excluded questions. The unconditioned results in Table~\ref{tab:baseline-results} establish the accompanying unconditioned improvement.

\Needspace{4\baselineskip}
\noindent\textbf{Reflection adds gains even with gold entities.}
On LC-QuAD 2.0, reflection adds 2.74 F1 points when gold entities are supplied (Table~\ref{tab:a3-a4-interaction}). Part of its observed benefit therefore persists beyond entity-identity correction. Conversely, gold entities add 5.88 points after reflection, showing that accurate entity linking retains value within the reflective pipeline.

The combined improvement is 9.21 points over predicted entities without reflection. Its 0.59-point shortfall from the sum of the standalone gains is descriptive because the interaction has no reported uncertainty estimate.

\label{sec:solver-results}

\Needspace{4\baselineskip}
\noindent\textbf{Graph-checked construction loses accuracy even where coverage is high.}
Replacing default construction with the restricted solver reduces pooled MF1 from 69.58 to 60.06, a loss of 9.52 points (Table~\ref{tab:solver-results}). The largest decline is 22.43 points in the intersective stratum, where the solver fires on 41 of 49 questions. Under the reported per-question averaging, even the maximum possible deterioration on the eight fallback cases cannot explain this entire decline. Some net loss must occur among the solver-fired cases. Appendix~\ref{app:solver-loss-bound} derives this implication from the aggregates, and Appendix~\ref{app:solver-case-analysis} traces it to specific construction decisions.

The subsective and relational strata gain 2.06 and 1.36 MF1 points, while relational EA is unchanged. These limited gains are outweighed by losses elsewhere, including the modal and privative strata, where the solver fires rarely. Endpoint support does not establish that a selected binding expresses the intended question. Case-level traces (Appendix~\ref{app:solver-case-analysis}) attribute most losses to existence checks against sparse schema elements, chiefly missing \texttt{rdfs:label} triples, and, for modal and privative questions, to the solver binding a plausible but inapplicable relational property instead of declining to fire; they show that greater use of verified construction is insufficient to improve overall answer quality.

\subsection{Retrieval Gains Across Targets}
\label{app:transition-recall-panels}

Figure~\ref{fig:prospective-recall-curves} visualises the three retrieval
configurations on each target using the recall@10 values in
Table~\ref{tab:baseline-results}. Earlier illustrative curves are excluded because
their other $k$ values are unsupported and their displayed $k=10$ values disagree
with the table. No interpolation or additional recall measurements are asserted.

\begin{figure}[t]
\centering
\input{figures/generated/fig_prospective_recall_curves.include.tex}\unskip
\caption{\textbf{The encoder and hierarchy improve recall on every target.} Values are taken directly from Table~\ref{tab:baseline-results}. Connecting segments compare configurations within a target; they do not interpolate recall across retrieval budgets.}
\label{fig:prospective-recall-curves}
\end{figure}

\subsection{Refinement-Budget Sweep}
\label{app:refinement-budget-sweep}

Figure~\ref{fig:refinement-budget-sweep} sweeps the refinement-round budget on each target, anchored at its endpoints to Table~\ref{tab:a4-by-benchmark}'s No-refine (budget 0) and Refine (budget 3) values. Answer F1 rises monotonically and plateaus by budget 5, with no further change at budget 10 on any target; a budget of 3 already captures 95--96\% of the eventual budget-10 gain. This is consistent with refinement self-terminating once a round is accepted, rather than always consuming the full allowed budget, and it indicates the reported Refine configuration is not far below the ceiling this mechanism reaches at higher budgets.

\begin{figure}[t]
\centering
\input{figures/generated/fig_refinement_budget_sweep.include.tex}\unskip
\caption{\textbf{Most of reflective refinement's gain is captured within 3 rounds, with little further improvement up to a budget of 10.} Answer F1 against the refinement-round budget, one line per target. Budget-0 and budget-3 values match the No-refine and Refine columns of Table~\ref{tab:a4-by-benchmark}; budgets 1, 2, 5, and 10 are additional swept runs. Every target plateaus by budget 5 and is unchanged at budget 10; the budget-3 value already captures 95--96\% of the eventual (budget-10) gain on every target, consistent with refinement self-terminating once a round is accepted rather than always consuming the full allowed budget.}
\label{fig:refinement-budget-sweep}
\end{figure}

\subsection{Retry and Grounding Probes}

\noindent\textbf{Grounding probes have opposite effects on DBpedia and the diagnostic benchmark.}
Adding probes to retry increases DBpedia MF1 by 8.02 points but reduces the diagnostic benchmark's MF1 by 6.50 points. On the diagnostic benchmark, retry alone improves over neither component by 6.30 points, while activating both changes the score by $-0.20$ points. The aggregate comparisons show that the benefit of a checking mechanism depends on the evaluated setting. They do not identify the query-level cause of the interaction.

\begin{table*}[t]
\centering
\footnotesize
\tablecompact
\begin{tabular}{@{}lrrrrrrrr@{}}
\toprule
 & \multicolumn{2}{c}{Neither} & \multicolumn{2}{c}{Retry only} & \multicolumn{2}{c}{Probes only} & \multicolumn{2}{c}{Retry + probes} \\
\cmidrule(lr){2-3}\cmidrule(lr){4-5}\cmidrule(lr){6-7}\cmidrule(lr){8-9}
\rowcolor{TableHead}
Benchmark/KG target & MF1 & EA & MF1 & EA & MF1 & EA & MF1 & EA \\
\midrule
Spider4SPARQL & 80.16 & 73.79 & 79.44 & 73.28 & 81.92 & 75.11 & \textbf{82.35} & 75.48 \\
QALD-9-Plus (Wikidata) & 63.98 & 56.30 & 66.48 & 58.82 & 65.71 & 57.94 & \textbf{67.82} & 59.71 \\
QALD-10 & 71.93 & 70.99 & 72.99 & 71.76 & 72.41 & 71.35 & \textbf{74.18} & 73.02 \\
QALD-9-Plus (DBpedia) & 58.52 & 48.39 & 59.51 & 49.46 & 66.51 & 55.91 & \textbf{67.53} & -- \\
\rowcolor{TableAccent}
Composition diagnostic (pooled) & 69.78 & 65.75 & \textbf{76.08} & 72.04 & 63.27 & 58.01 & 69.58 & 65.19 \\
\bottomrule
\end{tabular}
\caption{\textbf{Retry and grounding probes interact differently across benchmarks.} Auxiliary retry/probe comparison using \texttt{gpt-5.5}; all metrics are
percentages. Bold marks the largest MF1 in each row. ``Both'' enables retry and
probes, a distinct meaning of full from the \emph{retrieval} configuration in
Table~\ref{tab:baseline-results}. Spider4SPARQL uses a matched $n=912$ sample for
the neither/retry comparison, whereas the main table targets the recorded $n=932$
split. DBpedia's both-enabled EA is unavailable. The diagnostic pooled record is
retained as a separate configuration result; no pooled retrieval row is inferred.}
\label{tab:config-ablation}
\end{table*}

Table~\ref{tab:config-ablation} preserves the reported $2\times2$ configuration comparison. The effect varies by benchmark and by the presence of the other component. In particular, the diagnostic benchmark scores higher with retry alone than with both retry and probes. Aggregate values do not identify the causal error mechanism or establish a universally optimal configuration; that would require current query traces and repeated paired comparisons.

\subsection{External Comparisons and Alternative Scoring}

\begin{table*}[t]
\centering
\footnotesize
\tablecompact
\begin{tabular}{@{}p{0.28\textwidth}lrrrrrr@{}}
\toprule
 & & Spider4- & \multicolumn{2}{c}{QALD-9-Plus} & \multicolumn{2}{c}{QALD-10} & LC-QuAD \\
\cmidrule(lr){4-5}\cmidrule(lr){6-7}
\rowcolor{TableHead}
Configuration / source & Metric & SPARQL & DBpedia & Wikidata & Live & Static & 2.0 \\
\midrule
\multicolumn{8}{@{}l}{\emph{This paper: flat control, recorded evaluation target, raw question}} \\
Flat control & EM & 74.03 & -- & 56.30 & 70.99 & 49.75 & -- \\
 & MF1 & 80.15 & 67.53 & 63.98 & 71.93 & 55.71 & 58.42 \\
\midrule
\multicolumn{8}{@{}l}{\emph{External results, retained under each source's metric names}} \\
SchemaForge \citep{schemaforge2025} & EM & 64.18 & -- & -- & 56.48 & 56.48 & 63.18 \\
 & MF1 & 68.42 & -- & -- & 60.72 & 60.72 & 66.41 \\
\addlinespace[3pt]
QAnswer & MF1 & -- & 30.39 & -- & -- & -- & -- \\
\addlinespace[3pt]
mKGQAgent (GPT-4o) \citep{mkgqagent2025} & MF1 & -- & -- & 54.83 & -- & -- & -- \\
\addlinespace[3pt]
SPINACH \citep{liu2024spinach} & EM & -- & -- & 58.3 & 63.1 & 63.1 & -- \\
 & F1 & -- & -- & 71.6 & 69.5 & 69.5 & -- \\
\addlinespace[3pt]
SAGA \citep{saga2026} & EM & -- & -- & 48.44 & 48.04 & 48.04 & 25.39 \\
 & F1 & -- & -- & 55.66 & 53.04 & 53.04 & 28.25 \\
\bottomrule
\end{tabular}
\caption{\textbf{External scores provide context under differing evaluation conditions.} Supplementary baseline metrics and external context, all in percent.
Every source value and its reported precision is retained. Dashes denote unavailable
metrics. External systems were not rerun; their backbones, snapshots, and protocols
vary, so no cross-source winner is highlighted. Repeated QALD-10 external values are
retained in the live/static columns of the original comparison record and do not
represent separately measured snapshot runs. The three retrieval configurations'
main MF1 results appear in Table~\ref{tab:baseline-results}.}
\label{tab:external-comparators}
\end{table*}

Table~\ref{tab:external-comparators} preserves the externally reported metrics and the alternative live-gold and exact-match results accompanying the main retrieval comparison. Metric definitions, backbones, graph snapshots, and available annotations differ across sources. Values are therefore presented with their original metric labels and without cross-system winner highlighting. The controlled evidence for the retrieval mechanism comes from the within-system comparisons in Table~\ref{tab:baseline-results}.

\subsection{A Bound on Loss Among Solver-Fired Cases}
\label{app:solver-loss-bound}

The intersective stratum contains $n=49$ questions. Solver mode fires on a subset $S$ of size 41 and falls back on the remaining subset $F$ of size 8. Assume that the reported MF1 is the arithmetic mean of per-question F1, that both configurations score the same 49 questions, and that scores are expressed on the $[0,100]$ scale. Write $b_i$ and $s_i$ for the baseline and solver-mode scores of question $i$.

The reported aggregate decline gives
\[
\sum_{i\in S}(b_i-s_i)+\sum_{i\in F}(b_i-s_i)
=49(89.85-67.42).
\]
Each fallback case can contribute at most 100 points of loss, so
\begin{align*}
\sum_{i\in S}(b_i-s_i)
&\geq49(22.43)-8(100)\\
&=299.07.
\end{align*}
Thus the average decline on the fired subset is at least $299.07/41\approx7.29$ points when the displayed means are taken literally. Allowing each reported mean to differ from its exact value by 0.005 gives a conservative lower bound above 7.28 points. Equivalently, fallback alone can explain at most $100(8/49)\approx16.33$ points of the observed 22.43-point aggregate decline.

This is an algebraic implication of the reported aggregates and the averaging assumptions. It is not a newly measured subgroup result, a significance test, or an estimate of the solver's causal effect. It establishes that a fallback-only account is numerically insufficient for these runs. Query traces and repeated paired runs are needed to determine which construction decisions account for the loss.

\subsection{Case-Level Attribution of Solver-Fired Losses}
\label{app:solver-case-analysis}

Per-question construction traces for the solver-fired cases make it possible to characterize this loss directly rather than only bound it algebraically.

Of the 41 solver-fired intersective cases, 19 regressed relative to default construction, 3 improved, and 19 were unchanged.
\begin{itemize}[leftmargin=1.4em]
  \item \textbf{Coarser or alternative live-valid property} (14 cases, e.g.\ diag-0031, diag-0047, diag-0112): the solver binds a live-verified property whose answer set is a strict or near-superset of the gold set (typically a broader datatype property or an \texttt{rdfs:label}-mediated path), while free-text more often emits the tighter gold property.
  \item \textbf{Sparse-label serialization} (5 cases): the solver reconstructs an entity via an \texttt{rdfs:label} join that is missing from the local snapshot, producing empty results; free-text more often matches the IRI string directly.
\end{itemize}
The coarser-property mode accounts for the majority of the regressions and is the principal driver of the fired-subset loss, consistent with (and substantially exceeding) the $\geq$7.29-point lower bound established above.

The relational stratum fires far more often than modal or privative (28 of 49 questions) and supports the same kind of breakdown: 1 case regressed, 3 improved, and 24 were unchanged. All 3 improvements directly correct default construction's own dominant relational failure mode: an unverified type restriction added on top of an already-sufficient property binding (e.g.\ default construction adds \texttt{?x a dbo:City} to \texttt{dbr:Argentina dbo:capital ?x}, \texttt{?x a dbo:Language} to \texttt{dbr:Austria dbo:officialLanguage ?x}, and \texttt{?x a dbo:MusicalWork} to \texttt{dbr:Germany dbo:anthem ?x}); the solver's property-first search never proposes the unverified type check in the first place, so there is nothing to drop as a late correction. The 1 regression (\texttt{diag-0108}) reconstructs the same target entity through an \texttt{rdfs:label} join instead of using its IRI directly -- the same sparse-label mechanism identified above for intersective, now on the entity side of a relational triple rather than the answer side of a class binding.

The subsective stratum fires on 15 of 29 questions. 6 regressed, 2 improved, and 7 were unchanged.
\begin{itemize}[leftmargin=1.4em]
  \item \textbf{Sparse-label / manufacturer-scoping fragility} (5 of the 6 regressions): the solver materializes the manufacturer entity through an \texttt{rdfs:label} join; the local snapshot lacks those triples for several key entities (\texttt{dbr:Mercedes-Benz}, \texttt{dbr:Volkswagen}, \ldots), returning empty results. Free-text more often matches the URI string directly.
\end{itemize}
The remaining regression and the two improvements are ordinary property-choice differences of the same kind seen in the intersective stratum. The dominant failure mode is therefore the same sparse-label mechanism already identified for intersective and for the single relational regression, rather than a new solver-specific pathology.

The modal and privative strata fire on too few questions (3 and 2 respectively) for a comparable case-level breakdown, but each fired case was inspected individually. Modal's 3 and privative's 2 fired cases all regressed sharply, for a different reason than the label- or type-restriction mechanisms above: each question requires a \texttt{dct:subject} category-membership pattern, but the solver instead binds a plausible single-hop relational property it was never intended to be offered for this question type (e.g.\ \texttt{dbo:officialLanguage}, \texttt{dbo:demonym}), live-verifying a confidently wrong shape rather than declining to fire. This is a shape-applicability failure distinct from the over-restriction pattern driving intersective's and relational's losses, and it affects every fired modal or privative case observed here.

\clearpage
\onecolumn
\twocolumn
\raggedbottom
\section{Extended Literature Review}
\label{app:extended-related-work}

\subsection{Scope and Selection Procedure}
\label{app:review-scope}

This appendix expands Section~\ref{sec:relwork} through a structured narrative review. The review asks how prior systems represent compositional meaning, expose grounding decisions, use feedback, retain experience, and evaluate correctness. Searches conducted on 5 October 2026 combined text-to-SPARQL, semantic parsing, schema linking, compositional generalization, self-correction, prompt optimization, and inference-time memory terms. Author and title searches checked the papers already cited in the manuscript. Primary sources were inspected through ACL Anthology, arXiv, conference proceedings, and author project repositories. The review includes publicly available work up to the search date and distinguishes conference publications from preprints. It does not claim exhaustive coverage or a preregistered systematic-review protocol.

We include work that introduces a directly relevant mechanism or establishes an evaluation distinction needed to interpret the present results. SQL parsing and general language-agent studies are included where their representation, constraint, or adaptation mechanisms transfer to query interpretation. We exclude commentary without a primary technical source, duplicate versions, and answer-generation systems whose mechanisms do not inform executable query construction. Inclusion in this review does not imply inclusion as an experimental baseline. Reported scores from incompatible tasks, annotations, snapshots, or metrics are not pooled. For each selected line of work, we distinguish the object being changed, the information available to the method, and the strength of its correctness claim.

\subsection{Compositional Meaning and Representation}
\label{app:review-composition}

Compositional generalization concerns the interpretation of unfamiliar combinations of familiar components. CFQ controls differences in the distributions of primitive rules and their compounds across training and evaluation splits \citep{keysers2020cfq}. COGS tests lexical and structural generalization in mapping English sentences to logical forms \citep{kim2020cogs}. These benchmarks make composition an explicit evaluation variable. Their controlled settings differ from open-schema grounding, where the intended operator may have to be expressed through an unfamiliar predicate, a path, or a reified statement. Our modifier strata probe such mismatches, but they do not constitute a CFQ-style controlled split or establish general compositional generalization.

Sentence-BERT provides an efficient similarity-based retrieval mechanism through independently encoded texts \citep{reimers2019sbert}. Similarity retrieval does not itself specify how a modifier changes a head concept. \citet{carvalho2025montague} evaluate modifier consistency using semantic distinctions among modifier classes. \citet{mohr2026concepts} connect encoder behavior to the semantic operators that its representation space can support and report that the benefits of supervision vary across composition types. We extend that encoder into schema retrieval with a hierarchy score that adds schema evidence to the representation signal, integrating compositional representation directly into the query-construction pipeline.

Three senses of composition must therefore remain distinct. Linguistic composition determines the meaning of an expression. Query composition combines predicates, bindings, and operators. Process composition connects grounding, binding, and construction. The theoretical preservation result concerns the last two through explicit semantic interface conditions. Better retrieval of a composed phrase is useful evidence about the first, but cannot establish the other conditions by itself.

\subsection{Structured Parsing and Schema Grounding}
\label{app:review-grounding}

Structured semantic parsing makes intermediate choices available for inspection. RAT-SQL represents schema relations and links between question expressions and database columns within a relation-aware encoder \citep{wang2020ratsql}. SPARQLing Queries instead predicts grounded question decompositions and uses a non-trainable transpiler to obtain SPARQL \citep{saparina2021sparqling}. These designs locate different responsibilities in learned representations and explicit transformations. GETT-QA incorporates graph embeddings into text-to-text query generation \citep{banerjee2023gettqa}. GRV-KBQA separates logical structure, grounding, and structure-aware validation \citep{grvkbqa2025}. Together they show that a decomposition is meaningful only when its intermediate objects expose the choices that later stages depend on.

Interactive systems obtain additional grounding evidence during construction. SPINACH explores a knowledge graph while building SPARQL and contributes an expert-annotated benchmark drawn from Wikidata query requests \citep{liu2024spinach}. SAGA maintains type information across an interaction and filters or annotates candidate properties using available schema evidence \citep{saga2026}. SchemaForge first identifies plausible graphs and then aligns a question with a local schema slice, followed by query generation and counterfactual checks \citep{schemaforge2025}. These approaches intervene at different points between candidate exposure and validation. Their reported results cannot isolate the effect of our encoder or hierarchy score because the surrounding controllers and evaluation settings differ.

Our grounding intervention operates within a fixed staged configuration. The encoder supplies a composition-sensitive retrieval signal, while type compatibility and hierarchy depth influence ranking. The mapping abstraction additionally records what grounding passes to binding and construction. A diagnostic can then refer to a specific interpretation and its context. This requires more information than the stage name alone. The trace must expose the chosen schema elements, relevant attachment, and scope of the proposed correction.

\subsection{Reflection, Prompt Optimization, and Retained Experience}
\label{app:review-reflection}

Self-Refine uses a language model to generate feedback and revise its current output without additional parameter training \citep{madaan2023selfrefine}. Reflexion stores linguistic feedback in episodic memory to guide subsequent trials \citep{shinn2023reflexion}. QueryAgent supplies differentiated corrections from environmental feedback during query construction \citep{huang2024queryagent}. InteracSPARQL exposes natural-language explanations to support interactive query refinement \citep{jian2026interacsparql}. These methods differ in the source of feedback and in whether they revise an answer, an action sequence, or knowledge retained for later use. A successful repair of one query does not establish preservation of previously learned query interpretations.

The process itself can also be optimized. DSPy represents language-model programs as parameterized modules and optimizes their demonstrations or prompts against a task metric \citep{khattab2024dspy}. GEPA uses execution trajectories and natural-language reflection to propose and evaluate prompt changes \citep{agrawal2025gepa}. Reflective Reasoning for SQL Generation directs diagnostics toward stage-level prompt updates \citep{mohr2026reflective}. These approaches motivate making the staged process $\pi$ an explicit object of change. An unchanged interpretation rule may behave differently after routing, retrieval, or composition instructions are revised. Aggregate validation of an updated program and semantic preservation of each established interpretation are different requirements.

Persistent context makes that distinction more consequential. Dynamic Cheatsheet retains reusable strategies and other information across inference queries without modifying model weights \citep{suzgun2025dynamic}. Agentic Context Engineering organizes adaptation through generation, reflection, and incremental curation of an evolving context \citep{zhang2025ace}. Its concern with information lost during repeated rewriting is closely related to requirement retention. Preserving a textual entry, however, does not ensure that the revised process selects it in the same contexts or preserves its meaning when composed with other entries.

Memory-based model editing offers a complementary perspective. SERAC stores edits explicitly and learns whether an input lies within an edit's intended scope \citep{mitchell2022serac}. The relevance is the need to distinguish affected and unaffected inputs. Our formalization specializes this concern to contextual query mappings and their staged application. Semantic identification determines which interpretation is established. Contextual separability prevents incompatible interpretations from sharing an indistinguishable selection context. Intervention localization identifies the mapping and process decision being changed. The preservation theorem then states sufficient conditions for retaining previously established query meanings. It is a conditional design result, not evidence that unrestricted reflective prompting satisfies those conditions.

\subsection{Validation and the Limits of Correction}
\label{app:review-validation}

Validation mechanisms establish properties at different levels. PICARD constrains generation by rejecting tokens incompatible with incremental parsing \citep{scholak2021picard}. Schema checks can rule out known incompatibilities, while endpoint execution tests whether a concrete query runs and what it returns. Each mechanism supplies useful evidence within its scope. None alone identifies the intended attachment or operator in an ambiguous question. Our optional solver likewise establishes graph support for its restricted candidates. Its guarantee is relative to that search space and does not turn graph support into semantic correctness.

The source of reflective feedback also matters. \citet{huang2024selfcorrect} find that intrinsic self-correction can fail or degrade reasoning performance when reliable external feedback is absent. This result concerns their tested reasoning settings and should not be generalized into a claim that all reflection fails. It supports separating a fluent critique from an independently justified diagnostic. In the present framework, an execution error may identify a malformed query, but an executable wrong query can require semantic evidence that the endpoint cannot supply.

Semantic preservation is stronger than answer agreement on a single graph. Different queries can coincide on one snapshot while differing in scope, joins, or behavior on another admissible graph. In text-to-SQL, distilled test suites evaluate denotations over multiple database instances to approximate semantic accuracy more closely \citep{zhong2020testsuites}. That work motivates broader checks, rather than directly furnishing a SPARQL equivalence procedure. Our theoretical claim is relative to declared semantic obligations and interface equivalences. A replay set supplies finite regression evidence. A universal guarantee over a scope requires a sound contract or proof covering that scope.

\subsection{Evaluation and Implications for This Study}
\label{app:review-evaluation}

Text-to-SPARQL evaluation changes substantially with the information supplied to the model. \citet{diallo2023evaluation} identify URI generation and annotation strategy as important sources of variation. \citet{dabramo2025llmtext2sparql} examine demonstration selection, model backbones, and candidate generation. Consequently, raw-question generation, gold-entity conditions, retrieved demonstrations, and fine-tuned systems should be compared under explicit protocols. Results from external papers provide context where these conditions differ. They do not establish a controlled ranking of the underlying mechanisms.

The benchmark families also test different demands. LC-QuAD~2.0 provides complex questions over Wikidata and DBpedia \citep{dubey2019lcquad2}. QALD-9-plus adds multilingual questions across the two graphs \citep{perevalov2022qald9plus}. QALD-10 emphasizes the transition to Wikidata \citep{usbeck2024qald10}, while Spider4SPARQL extends the evaluation of complex queries to a SPARQL setting \citep{kosten2023spider4sparql}. The SPINACH dataset adds naturally occurring expert query requests \citep{liu2024spinach}. Coverage of these settings is complementary to controlled compositional tests. No single family isolates grounding, structural interpretation, language coverage, and long-term adaptation simultaneously.

Our experiments address retrieval quality, answer quality, reflection, and a restricted symbolic construction procedure. The added theorem concerns an additional evaluation dimension, preservation across successive learned mappings. Establishing that dimension empirically would require an ordered query stream, a record of validated interpretations, versioned mapping and process updates, and replay of protected queries after each admitted change. New-query accuracy and retention should be reported separately, with failures attributed to mapping content, scope selection, or downstream composition. The current aggregate improvements support the evaluated interventions. They do not establish the identifiability and preservation assumptions of the theorem.

\section{Use of AI Writing Assistance}

During the preparation of this work the author(s) used Claude (Anthropic) in order to assist with experiment implementation (coding), writing, editing, and LaTeX preparation of the manuscript. After using this tool, the author(s) reviewed and edited the content as needed and take(s) full responsibility for the content of the publication.
\end{document}